\title{C-Planarity Testing of Embedded Clustered Graphs with Bounded Dual Carving-Width}
\titlerunning{C-Planarity Testing of Embedded C-Graphs with Bounded Dual Carving-Width}
\author{Giordano {Da Lozzo}}{Roma Tre University, Rome, Italy}{giordano.dalozzo@uniroma3.it}{}{}
\author{David Eppstein}{University of California, Irvine, USA}{eppstein@uci.edu}{}{}
\author{Michael T. Goodrich}{University of California, Irvine, USA}{goodrich@uci.edu}{}{}
\author{Siddharth Gupta}{Ben-Gurion University of the Negev, Beersheba, Israel}{siddhart@post.bgu.ac.il}{}{}
\authorrunning{G. Da Lozzo, D. Eppstein, M. T. Goodrich, and S. Gupta} 
\keywords{Clustered planarity, carving-width, non-crossing partitions, fixed-parameter tractability}
\theoremstyle{plain}
\newcommand{\merge}{\raisebox{-2pt}{\includegraphics[page=1,scale=0.35]{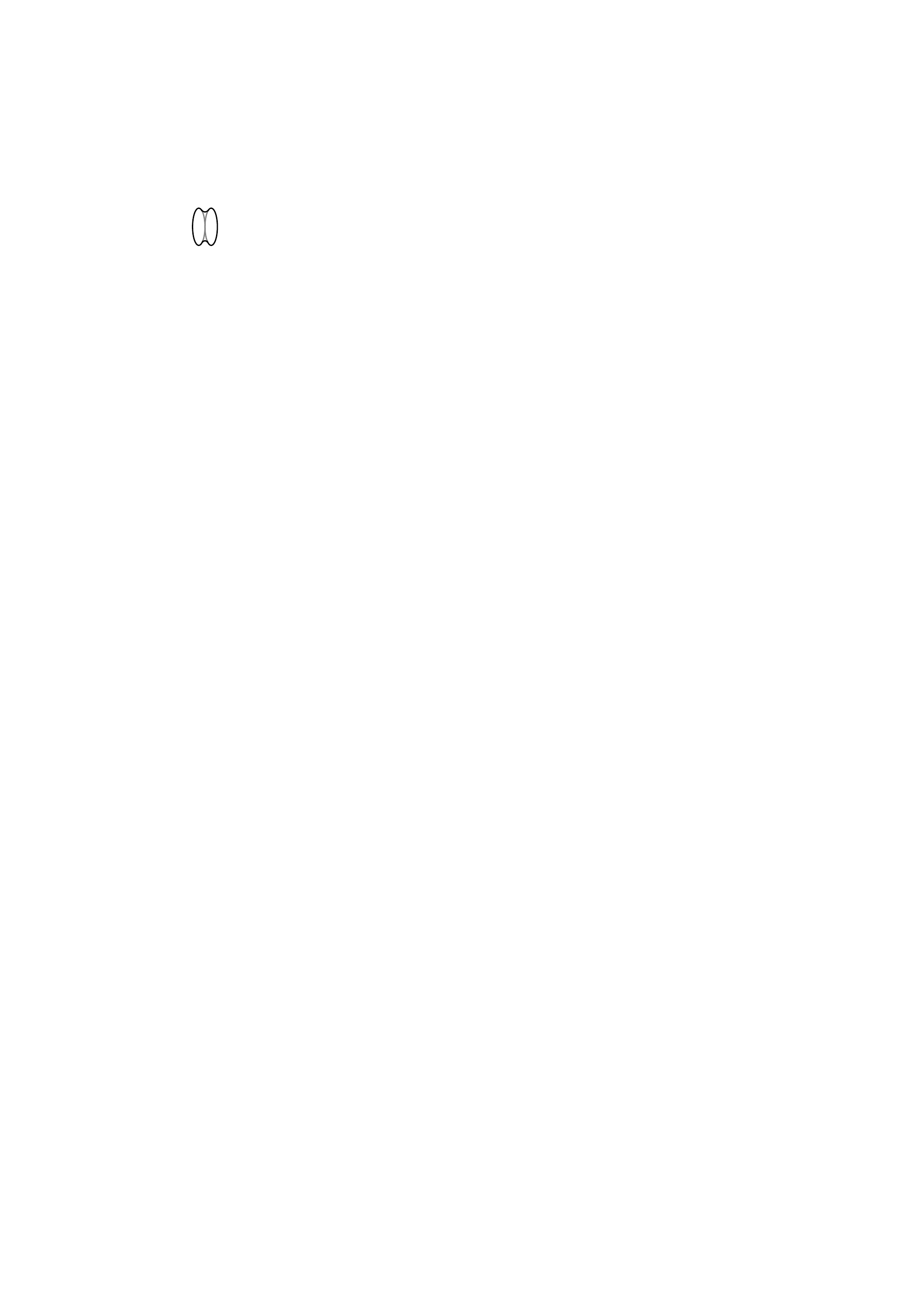}}\hspace{1pt}}
\Crefname{observation}{Observation}{Observations}
\Crefname{algorithm}{Algorithm}{Algorithms}
\Crefname{section}{Section}{Sections}
\Crefname{lemma}{Lemma}{Lemmas}
\Crefname{claim}{Claim}{Claims}
\Crefname{figure}{Fig.}{Figs.}
\Crefname{figure}{Fig.}{Figs.}
\Crefname{enumi}{Condition}{Conditions}
\newtheorem{observation}{Observation}
\renewcommand{\emph}[1]{\textcolor{lipicsblue}{\em #1}}
\def\etal{{\em et~al.}\xspace}
\definecolor{lipicsblue}{rgb}{0.08235294118,0.3098039216,0.537254902}
\definecolor{ourgreen}{rgb}{0,0.588,0.509}
\definecolor{ourred}{rgb}{1,0.3,0.3}
\newcommand{\cgraph}[2]{\ensuremath{\mathcal C^{#2}_{#1}(G^{#2}_{#1},\mathcal T^{#2}_{#1})}}
\newcommand{\NC}{\mathscr{NC}}
\newcommand{\RNC}{\mathscr{RNC}}
\newcommand{\CAT}{\mathscr{CAT}}
\newcommand{\calC}[1]{\ensuremath{{\cal C}^{#1}}\xspace}
\newcommand{\calT}[1]{\ensuremath{{\cal T}^{#1}}\xspace}
\DeclareMathOperator{\emw}{emw}
\DeclareMathOperator{\cw}{cw}
\DeclareMathOperator{\tw}{tw}
\DeclareMathOperator{\dual}{\delta}
\begin{document}
\maketitle

\begin{abstract}
For a \emph{clustered graph}, i.e, a graph whose vertex set is recursively partitioned into clusters, the {\sc C-Planarity Testing} problem asks whether it is possible to find a planar embedding of the graph and a representation of each cluster as a region homeomorphic to a closed disk such that {\bf 1.}~the subgraph induced by each cluster is drawn in the interior of the corresponding disk, {\bf 2.}~each edge intersects any disk at most once, and {\bf 3.}~the nesting between clusters is reflected by the representation, i.e., child clusters are properly contained in their parent cluster.
The computational complexity of this problem, whose study has been central to the theory of graph visualization since its introduction in 1995 \href{https://link.springer.com/chapter/10.1007%2F3-540-60313-1_145}{[Feng, Cohen, and Eades, {\em Planarity for clustered graphs}, ESA'95]}, has only been recently settled \href{https://arxiv.org/abs/1907.13086}{[Fulek and T{\'{o}}th, {\em Atomic Embeddability, Clustered Planarity, and Thickenability}, to appear at SODA'20]}. Before such a breakthrough, the complexity question was still unsolved even when the graph has a prescribed planar embedding, i.e, for \emph{embedded clustered graphs}.

We show that the {\sc C-Planarity Testing} problem admits a single-exponential single-parameter FPT algorithm for embedded clustered graphs, when parameterized by the carving-width of the dual graph of the input. This is the first FPT algorithm for this long-standing open problem with respect to a single notable graph-width parameter. Moreover, in the general case, the polynomial dependency of our FPT algorithm is smaller than the one of the algorithm by Fulek and T{\'{o}}th. To further strengthen the relevance of this result, we show that the {\sc C-Planarity Testing} problem retains its computational complexity when parameterized by several other graph-width parameters, which may potentially lead to faster algorithms.
\end{abstract}

\clearpage

\section{Introduction}
Many real-word data exhibit an intrinsic hierarchical structure that can be captured in the form of  clustered graphs, i.e., graphs equipped with a recursive clustering of their vertices. 
This graph model has proved very powerful to represent information at different levels of abstraction and drawings of clustered networks appear in a wide variety of application domains, such as software visualization, knowledge representation, visual statistics, and data mining. More formally, a \emph{clustered graph} (for short, \emph{c-graph}) is a pair $\cgraph{}{}$, where $G$ is the \emph{underlying graph} and $\mathcal T$ is the \emph{inclusion tree} of $\mathcal C$, i.e., a rooted tree whose leaves are the vertices of $G$. Each non-leaf node $\mu$ of $\mathcal T$ corresponds to a cluster containing the subset $V_\mu$ of the vertices of $G$ that are the leaves of the subtree of $\mathcal T$ rooted at $\mu$. Edges between vertices of the same cluster (resp.,\ of different clusters) \mbox{are \emph{intra-cluster edges} (resp.,\ \emph{inter-cluster edges}).}

A natural and well-established criterion for a readable visualization of a c-graph has been derived from the classical notion of graph planarity. 
\mbox{A~\emph{c-planar drawing}} of a c-graph $\cgraph{}{}$ (see \cref{fi:realizable_c}) is a planar drawing of $G$ together with a representation of each cluster~$\mu$ in~$\mathcal T$ as a region $D(\mu)$ homeomorphic to a closed disc such that:
(i) for each cluster $\mu$ in $\mathcal T$, region $D(\mu)$ contains the drawing of the subgraph $G[V_\mu]$ of $G$ induced by $V_\mu$;
(ii) for every two clusters $\mu$ and $\eta$ in $\mathcal T$, it holds $D(\eta) \subseteq D(\mu)$ if and only if $\eta$ is a descendant of $\mu$ in $\mathcal T$;
(iii) each edge crosses the boundary of any cluster disk at most once; and
(iv) the boundaries of no two cluster disks intersect.
A c-graph is \emph{c-planar} if it admits a c-planar drawing.

The {\sc C-Planarity Testing} problem,  introduced by Feng, Cohen, and Eades more than two decades ago~\cite{fce-pcg-95}, asks for the existence of a c-planar drawing of a c-graph.
Despite several algorithms having been presented in the literature to construct c-planar drawings of c-planar c-graphs with nice aesthetic features~\cite{DBLP:journals/dcg/AngeliniFK11,DBLP:conf/gd/BattistaDM01,DBLP:journals/jda/HongN10,DBLP:journals/dam/NagamochiK07}, determining the computational complexity of the {\sc C-Planarity Testing} problem has been one of the most challenging quests in the graph drawing research area~\cite{DBLP:conf/gd/BrandenburgEGKLM03,DBLP:conf/compgeom/CorteseB05,DBLP:journals/jgaa/Schaefer13}.
To shed light on the complexity of the problem, several researchers have tried to highlight its connections with other notoriously difficult problems in the area~\cite{DBLP:journals/cj/AngeliniL16,DBLP:journals/jgaa/Schaefer13}, as well as to consider relaxations~\cite{DBLP:journals/comgeo/AngeliniLBFPR15,DBLP:journals/jgaa/AngeliniLBFPR17,DBLP:journals/jgaa/AthenstadtC17,DBLP:journals/jgaa/DidimoGL08,DBLP:journals/jgaa/LozzoBFP18,DBLP:journals/corr/abs-1907-01630} and more constrained versions~\cite{DBLP:conf/soda/AkitayaFT18,DBLP:conf/isaac/AngeliniL16,addf-sprepg-17,ClusteredLevel15,DBLP:journals/dm/CorteseBPP09,DBLP:conf/sofsem/ForsterB04,DBLP:conf/compgeom/FulekK18}
of the classical notion of c-planarity. 
Algebraic approaches have also been considered~\cite{DBLP:journals/corr/abs-1305-4519,DBLP:conf/alenex/GutwengerMS14}. Only recently, Fulek and T{\'{o}}th settled the question by giving a polynomial-time algorithm for a generalization of the {\sc C-Planarity Testing} problem called Atomic Embeddability~\cite{FulekTothSODA19}.

A cluster $\mu$ is \emph{connected} if $G[V_\mu]$ is connected, and it is \emph{disconnected} otherwise. 
A c-graph is \emph{c-connected} if every cluster is connected. Efficient algorithms for the c-connected case have been known since the early stages of the research on the problem~\cite{cdfpp-cccg-06,DBLP:conf/latin/Dahlhaus98,fce-pcg-95}. Afterwards, polynomial-time algorithms have also been conceived for c-graphs satisfying other, weaker, connectivity requirements~\cite{CornelsenW06,Goodrich2006,Gutwenger2002}. 
A c-graph is \emph{flat} if each leaf-to-root path in $\calT{}$ consists of exactly two edges, that is, the clustering determines a partition of the vertex set; see, e.g.,~\cref{fi:realizable_a}. 
For flat c-graphs polynomial-time algorithms are known for several restricted cases~\cite{DBLP:conf/soda/AkitayaFT18,b-dppIIItcep-98,BlasiusR16,cdfk-atcpefcg-14,df-ectefgsf-13,FulekKMP15,DBLP:conf/compgeom/FulekK18,hn-sat2pepg-14,JelinekJKL08,Jelinkova2008}. 

\subparagraph{Motivations and contributions.}
In this paper, we consider the parameterized complexity of the \mbox{\sc C-Planarity Testing} problem for 
\emph{embedded c-graphs}, i.e., c-graphs with a prescribed combinatorial embedding; see also~\cite{BlasiusR16,ck-ssscp-12,degg-sfefcg-conf-17,Jelinkova2008} for previous work in this direction.

In \cref{se:preliminaries}, we show that the {\sc C-Planarity Testing} problem retains its  complexity when restricted to instances of bounded path-width and to connected instances of bounded tree-width. Such a result implies that the goal of devising an algorithm parameterized by graph-width parameters that are within a constant factor from tree-width (e.g., branch-width~\cite{DBLP:journals/jct/RobertsonS91}) or that are bounded by path-width (e.g., tree-width, rank-width~\cite{DBLP:journals/jgt/Oum08}, boolean-width~\cite{DBLP:conf/wg/AdlerBRRTV10}, and clique-width~\cite{DBLP:journals/siamcomp/CorneilR05}) and with a dependency on the input size which improves upon the one in~\cite{FulekTothSODA19} has to be regarded as a major algorithmic challenge.

Remarkably, before the results presented in~\cite{FulekTothSODA19}, the computational complexity of the problem was still unsolved even for instances with faces of bounded size, and polynomial-time algorithms were known only for ``small'' faces and in the flat scenario. Namely, Jelinkova~\etal~\cite{Jelinkova2008} presented a quadratic-time algorithm for $3$-connected flat c-graphs with faces of size at most $4$. Subsequently, Di Battista and Frati~\cite{df-ectefgsf-13} presented a linear-time/linear-space algorithm for embedded flat c-graphs with faces of size at most $5$. 

Motivated by the discussion above and by the results in \cref{se:preliminaries}, we focus our attention on embedded c-graphs~$\cgraph{}{}$ whose underlying graph $G$ has {\em both} bounded tree-width and bounded face size, i.e., instances such that the \emph{carving-width} of the dual $\dual(G)$ of $G$ is bounded.  
In \cref{se:algorithm}, we present an FPT algorithm based on a dynamic-programming approach on a bond-carving decomposition to solve the problem for embedded flat c-graphs, which is ultimately based on maintaining a succinct description of the internal cluster connectivity via non-crossing partitions. 
We remark that, to the best of the authors' knowledge, this is the first FPT algorithm for the {\sc C-Planarity Testing} problem, with respect to a {\em single} \mbox{graph-width parameter. More formally, we prove the following.}

\begin{theorem}\label{th:main}
{\sc C-Planarity Testing} can be solved in $O(2^{4 \omega + \log \omega} n + n^2)$ time for any $n$-vertex embedded flat c-graph $\cgraph{}{}$, where $\omega$ is the carving-width of $\dual(G)$, if a carving decomposition of $\dual(G)$ of width $\omega$ is provided, and in $O(2^{4 \omega + \log \omega} n + n^3)$ time, otherwise.
\end{theorem}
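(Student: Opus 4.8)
The plan is to combine a combinatorial reformulation of c-planarity for embedded flat c-graphs with a bottom-up dynamic program over a \emph{bond-carving decomposition} of the dual $\dual(G)$. First I would reduce the problem to deciding whether $G$ admits an \emph{augmentation} $G'$ obtained by inserting, inside the faces of $G$ and pairwise without crossings, chords joining two vertices of a common face that belong to the same cluster, so that in $G'$ every cluster induces a connected subgraph and, in addition, every cluster $\mu$ can be surrounded by a disk $D(\mu)$ avoiding all other clusters (equivalently, $G'-V_\mu$ lies inside a single face of $G'[V_\mu]$); since the embedding is fixed, this is the combinatorial content of c-planarity for this class, and I would either invoke it from known results or prove it directly. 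We may assume $G$ is connected (treating components separately, or by a standard reduction), so that $\dual(G)$ is a connected plane (multi)graph whose vertices are the faces of $G$.

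Next, recall that in a carving decomposition the leaves of the decomposition tree correspond to the faces of $G$, and each tree edge induces a partition of the faces into two sets $A,\bar A$ with $|\partial(A)|\le\omega$; using a \emph{bond} carving decomposition (which exists without increasing the width, for connected graphs) guarantees that each side induces a connected subgraph of $\dual(G)$, so that $A$ corresponds to a region $R$ of the sphere bounded by a simple cycle $\gamma$ of $G$ of length at most $\omega$ (a bond of $\dual(G)$ is a simple cycle of its dual $G$). Rooting the decomposition, a node $\nu$ is associated with the disk $R_\nu$ bounded by $\gamma_\nu$, and the DP stores, for the portion of a hypothetical augmentation and of its cluster disks lying inside $R_\nu$, only its \emph{trace} on $\gamma_\nu$: a non-crossing partition of the cyclically ordered vertices of $\gamma_\nu$, with monochromatic blocks, recording which boundary vertices are already joined within $R_\nu$, together with the bounded extra bookkeeping needed to certify the non-separation condition along $\gamma_\nu$. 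Because both the partition and this auxiliary data are governed by the planar (non-crossing, properly nested) structure inside the disk, the number of feasible states at $\nu$ is $2^{O(\omega)}$, and in fact at most a small polynomial factor times the Catalan number $C_\omega\le 4^\omega$. A state is kept only if it is consistent: no monochromatic violation, and no cluster that has become entirely enclosed by $R_\nu$ is left disconnected or left separating its complement.

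The recurrence has two cases. At a leaf (a single face $f$, already a disk, with $|f|\le\omega$), I would enumerate the $O(4^\omega)$ non-crossing partitions of the vertices of $f$ and keep those realizable by a non-crossing set of same-cluster chords inside $f$ and consistent with the cluster structure. At an internal node $\nu$ with children $\nu_1,\nu_2$, the disk $R_\nu$ is the union of $R_{\nu_1}$ and $R_{\nu_2}$ glued along a common boundary path, and $\gamma_\nu$ is $\gamma_{\nu_1}$ and $\gamma_{\nu_2}$ with the glued part removed; for every pair of states, one from each child, I would take the union of the two traces (as partitions of $\gamma_{\nu_1}\cup\gamma_{\nu_2}$), compute its transitive closure, restrict it to the vertices surviving on $\gamma_\nu$, merge the auxiliary data, and test consistency, discarding in particular any state in which a cluster now entirely contained in $R_\nu$ fails to be connected, or in which an internalized vertex is cut off from the rest of its cluster. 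The correctness claim is that, for every node $\nu$, the stored states are exactly the traces realizable by some valid partial augmentation inside $R_\nu$; this is proved by bottom-up induction, the combine step mirroring precisely the act of drawing inside $R_\nu$ everything drawn inside $R_{\nu_1}$ and inside $R_{\nu_2}$. The instance is a yes-instance iff at the root, where $R$ is the whole sphere and $\gamma$ is empty, at least one consistent state survives.

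For the running time: a carving decomposition has $O(n)$ nodes; converting the given carving decomposition of $\dual(G)$ into a bond one and setting up the dual and the DP tables takes $O(n^2)$ additional time, while computing a carving decomposition of $\dual(G)$ of width $\omega$ from scratch, when none is supplied, costs the extra $O(n^3)$ via a bounded-carving-width algorithm. At each internal node the combine step inspects $2^{O(\omega)}\cdot 2^{O(\omega)}$ pairs of states and spends $O(\omega)$ time per pair on the union--find/closure and the consistency checks, which with the Catalan bound gives $O(2^{4\omega+\log\omega})$ time per node and $O(2^{4\omega+\log\omega}\,n)$ overall, matching the statement. The main obstacle is precisely the design and analysis of the DP state: it must be simultaneously of size $2^{O(\omega)}$, a \emph{faithful} summary (no realizable configuration lost, no unrealizable one introduced), and \emph{composable} across the carving tree --- and, crucially, it must capture not only the non-crossing connectivity of each cluster but also the global ``enclosable by pairwise-disjoint disks'' requirement, which does not reduce to connectivity alone; getting the combine step to update and certify this condition correctly, while also proving the exact-characterization invariant, is the crux of the argument.
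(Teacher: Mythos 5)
Your overall architecture coincides with the paper's: a bond-carving decomposition of $\dual(G)$ whose bags correspond, by the cycle/bond duality, to disk regions bounded by cycles of length at most $\omega$; states that are non-crossing partitions of the cyclically ordered boundary with same-cluster blocks; a merge that takes the transitive closure of the union of two child partitions and projects onto the new boundary; counters that detect a cluster that has been fully internalized while still disconnected; and the same $\mathscr{CAT}(\omega)^2\cdot O(\omega)$ per-node cost, $O(n^2)$ for the bond-carving conversion, and $O(n^3)$ for computing a decomposition from scratch.

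However, there is a genuine gap at the very point you yourself flag as ``the crux.'' Your reformulation carries the disk-enclosure requirement (each cluster surroundable by a disk avoiding all other clusters) into the augmentation problem, and you assert that this ``does not reduce to connectivity alone,'' promising only unspecified ``bounded extra bookkeeping'' to certify it across merges. The paper's proof stands on the characterization of Di Battista and Frati (\cref{th:characterization}): an embedded c-graph is c-planar if and only if $G$ is planar, $\mathcal C$ is \emph{hole-free} --- a condition on the input graph $G$ itself, testable in linear time via Dahlhaus's algorithm, with no reference to the saturation --- and there exists a planar saturation in which every cluster is connected. Once hole-freeness is discharged upfront, the disk-enclosure requirement \emph{does} reduce to connectivity, so the DP state need only be the admissible partition plus the counters; no further bookkeeping is needed or used. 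Without invoking (or reproving) that characterization, your invariant that the stored states are exactly the realizable traces establishes only the existence of a c-connected planar saturation, not c-planarity, and the additional state you would need to track enclosure directly is never defined --- so the proposed proof does not close. A secondary omission: you reduce to connected $G$ but not to $2$-connected $G$; the paper needs the latter (\cref{le:2connected}) so that each interface cycle is simple and its boundary vertices carry a well-defined cyclic order (a cutvertex can occur twice on a face boundary), and so that the bond-carving machinery applies cleanly.
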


It is well know that the carving-width $\cw(\dual(H))$ of the dual graph $\dual(H)$ of a plane graph $H$ with maximum face size $\ell(H)$ and tree-width $\tw(H)$ satisfies the relationship $\cw(\dual(H)) \leq \ell(H)(\tw(H)+2)$~\cite{DBLP:journals/ijcga/BiedlV13,DBLP:journals/endm/BouchitteMT01}. Therefore, \cref{th:main} provides the first\footnote{The results in this paper were accepted for publication before the contribution of Fulek and T{\'{o}}th in~\cite{FulekTothSODA19}.} polynomial-time algorithm for instances of bounded face size and bounded tree-width, which answers an open question posed by Di Battista and Frati~\cite[Open~Problem~(ii)]{df-ectefgsf-13} for instances of bounded tree-width; also, since any $n$-vertex planar graph has tree-width in $O(\sqrt{n})$, it provides an $2^{O(\sqrt{n})}$ subexponential-time algorithm for instances of bounded face size, which improves the previous $2^{O(\sqrt{n}\log{n})}$ time bound presented in~\cite{degg-sfefcg-conf-17} for such instances. 

Further implications of \cref{th:main} for instances of bounded embedded-width and of bounded dual cut-width are discussed in \cref{se:related-params}. Moreover, we extend \cref{th:main} to get an FPT algorithm for general non-flat embedded c-graphs, whose running time is $O(4^{4 \omega + \log \omega} n + n^2)$ if a carving decomposition of $\dual(G)$ of width $\omega$ is provided, and is $O(4^{4 \omega + \log \omega} n + n^3)$, otherwise. 
The details for such an extension can be found in \cref{se:algorithm-non-flat}.

\section{Definitions and Preliminaries}\label{se:preliminaries} 

In this section, we give definitions and preliminaries that will be useful throughout.

\subparagraph{Graphs and connectivity.} 
A \emph{graph} $G=(V,E)$ is a pair, where $V$ is the set of \emph{vertices} of~$G$ and~$E$ is the set of \emph{edges} of $G$ ,i.e., pairs of vertices in $V$. A \emph{multigraph} is a generalization of a graph that allows the existence of multiple copies of the same edge.
The \emph{degree} of a vertex is the number of its incident edges.
We denote the \emph{maximum degree} of $G$ by $\Delta(G)$.
Also, for any $S \subseteq V$, we denote by $G[S]$ the subgraph of $G$ induced by the vertices in~$S$.

A graph is \emph{connected} if it contains a path between any two vertices. A \emph{cutvertex} is a vertex whose removal disconnects the graph. 
A connected graph containing no cutvertices is \emph{$2$-connected}.
The \emph{blocks} of a graph are its maximal $2$-connected subgraphs. In this paper, we only deal with connected graphs, unless stated otherwise.

\subparagraph{Planar graphs and embeddings.}
A drawing of a graph is \emph{planar} if it contains no edge crossings.
A graph is \emph{planar} if it admits a planar drawing.
Two planar drawings of the same graph are \emph{equivalent} if they determine the same \emph{rotation} at each vertex, i.e, the same circular orderings for the edges around each vertex.
A \emph{combinatorial embedding} (for short, \emph{embedding}) is an equivalence class of planar drawings.
A planar drawing partitions the plane into topologically connected regions, called \emph{faces}. 
The bounded faces are the \emph{inner faces}, while
the unbounded face is the \emph{outer face}.
A combinatorial embedding together with a choice for the outer face defines a \emph{planar embedding}.
An \emph{embedded graph} (resp.\ \emph{plane graph})~$G$ is a planar graph with a fixed combinatorial embedding (resp.\ fixed planar embedding). The \emph{length} of a face $f$ of $G$ is the number of occurrences of the edges of $G$ encountered in a traversal of the boundary of $f$. The \emph{maximum face size} $\ell(G)$ of $G$ is the maximum length over all~faces~of~$G$.

\subparagraph{C-Planarity.}
An \emph{embedded c-graph} $\calC{}(G,\calT{})$ is a c-graph whose underlying graph $G$ has a prescribed combinatorial embedding, and it is \emph{c-planar} if it admits a c-planar drawing that preserves the given embedding. Since we only deal with embedded c-graphs, in the remainder of the paper we will refer to them simply as c-graphs. Also, when $G$ and $\cal T$ are clear from the context, we simply denote $\calC{}(G,\calT{})$ as $\cal C$.
A \emph{candidate saturating edge} of~$\mathcal C$ is an edge not in $G$ between two vertices of the same cluster in $\mathcal T$ that are incident to the same face of $G$; refer to~\cref{fi:realizable_b}.
A c-graph $\cgraph{}{\prime}$ with $\mathcal T' = \mathcal T$ obtained by adding to~$\calC{}$ a subset~$E^+$ of its candidate saturating edges is a \emph{super c-graph} of~$\calC{}$; also, set $E^+$ is a \emph{planar saturation} if $G'$ is planar.
Further, c-graph $\mathcal C$ is \emph{hole-free} if there exists a face $f$ in $G$ such that when $f$ is chosen as the outer face for $G$ no cycle composed of vertices of the same cluster encloses a vertex of a different cluster in its interior. Finally, two c-graphs are \emph{equivalent} if and only if they are both c-planar or they are both not c-planar.

\begin{remark}
In this paper, we only consider c-graphs whose underlying graph is connected, unless stated otherwise.
\end{remark}

We will exploit the following characterization presented by Di~Battista and Frati~\cite{df-ectefgsf-13}, which holds true also for non-flat c-graphs although originally only proved for flat c-graphs.

\begin{theorem}[\cite{df-ectefgsf-13}, Theorem~1]\label{th:characterization}
A c-graph $\cgraph{}{}$ is c-planar if and only if:
\begin{enumerate}[(i)]
\item\label{con:1} $G$ is planar,
\item\label{con:2} $\mathcal C$ is hole-free, and
\item\label{con:3} 
there exists a super c-graph $\cgraph{}{*}$ of $\mathcal C$ such that $G^*$ is planar and $\mathcal C^*$ is c-connected.
\end{enumerate}
\end{theorem}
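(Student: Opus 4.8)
The plan is to establish the two implications separately; the forward one is routine, and the backward one is where the real work lies. For \emph{necessity}, fix a c-planar drawing $\Gamma$ of $\mathcal C$ respecting the given embedding, with cluster regions $\{D(\mu)\}$. \Cref{con:1} is immediate, as $\Gamma$ contains a planar drawing of $G$. For \cref{con:2} I would take the outer face $f$ of $\Gamma$ as the witness and argue by contradiction: if a cycle $C$ with all vertices in a cluster $\mu$ enclosed a vertex $v$ of a cluster that is not a descendant of $\mu$, then --- since every edge of $C$ lies in $G[V_\mu]\subseteq D(\mu)$ and $D(\mu)$ is a topological disc --- the bounded region of $\mathbb R^2\setminus C$ would be contained in $D(\mu)$, forcing $v\in D(\mu)$ and contradicting conditions (i)--(ii) of a c-planar drawing. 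For \cref{con:3} I would add, inside the region of each disconnected cluster $\mu$, a minimal crossing-free set of edges making $G[V_\mu]$ connected; the point to verify is that these can be taken to be \emph{candidate} saturating edges, which I would do by showing that the auxiliary graph on the connected components of $G[V_\mu]$ --- two components being adjacent when some face of $G$ carries a vertex of each on its boundary within $D(\mu)$ --- is connected, and then realising a spanning tree of it, one saturating edge per face.

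For \emph{sufficiency}, assume \cref{con:1,con:2,con:3} and let $\cgraph{}{*}$ be the c-connected super c-graph from \cref{con:3}. Since every saturating edge joins two vertices incident to a common face of $G$, the embedding of $G$ extends to an embedding of the planar graph $G^*$. I would then invoke the classical characterisation of c-planarity for c-connected c-graphs~\cite{cdfpp-cccg-06,DBLP:conf/latin/Dahlhaus98,fce-pcg-95}: in the fixed-embedding setting, a c-connected c-graph with planar underlying graph is c-planar if and only if it is hole-free, and in that case the nesting of the cluster regions can be realised as well. A c-planar drawing of $\cgraph{}{*}$, after deleting the edges of $E^+$, restricts to a c-planar drawing of $\mathcal C$ that preserves the embedding; hence it suffices to produce from \cref{con:2,con:3} a super c-graph of $\mathcal C$ that is planar, c-connected, \emph{and} hole-free.

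Producing such a super c-graph is the step I expect to be the main obstacle: \cref{con:3} only supplies \emph{some} planar c-connected saturation, and saturating a hole-free c-graph need not preserve hole-freeness, because a new monochromatic cycle passing through saturating edges may enclose a vertex of another cluster. The plan is to start from the saturation $E^+$ of \cref{con:3} and repair holes one at a time: if a monochromatic cycle $C$ of a cluster $\mu$ encloses a vertex outside $V_\mu$ with respect to the outer face witnessing \cref{con:2}, then $C$ must use at least one saturating edge (otherwise $\mathcal C$ itself would be non-hole-free), and I would reroute that edge to the opposite side within its face, or exchange it for another candidate saturating edge joining the same two components of $G[V_\mu]$, arguing via the hole-freeness of $\mathcal C$ that a suitable potential --- such as the number of vertices enclosed by monochromatic cycles --- strictly decreases, so that the repair terminates. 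Finally, although \cref{th:characterization} is stated for arbitrary inclusion trees whereas the original argument is for flat c-graphs, I would extend it by induction on the depth of $\mathcal T$, applying the argument level by level while keeping each cluster region nested inside that of its parent.
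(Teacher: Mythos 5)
This statement is not proved in the paper at all: \cref{th:characterization} is imported verbatim from Di Battista and Frati~\cite{df-ectefgsf-13} (their Theorem~1), and the only original claim the present paper makes about it is the unproved remark that it also holds for non-flat c-graphs. So there is no in-paper proof to compare your attempt against; what follows is an assessment of the attempt on its own terms. Your necessity direction is the standard argument and is essentially sound: \cref{con:1} is immediate; the disc-containment argument for \cref{con:2} is correct (modulo the implicit convention that $D(\mu)$ contains no vertex outside $V_\mu$); and connecting the components of each $G[V_\mu]$ by crossing-free curves inside $D(\mu)$, each of which necessarily lies within a single face of $G$ and hence realizes a candidate saturating edge, is the right way to obtain the super c-graph of \cref{con:3} (the connectivity of your auxiliary component-adjacency graph still needs an explicit argument, but this is routine).

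The genuine gap is in sufficiency, at exactly the step you flag as the main obstacle. Reducing to the c-connected case requires a saturation that is simultaneously planar, c-connected \emph{and} hole-free, and your repair procedure does not establish that one exists. Concretely: (1) ``rerouting a saturating edge to the opposite side within its face'' is not in general an available move --- once the two occurrences of its endpoints on the face boundary are fixed, a chord inside a face is determined up to isotopy, so there is no ``opposite side'' unless an endpoint occurs several times on that boundary; (2) the existence of an \emph{alternative} candidate saturating edge joining the same two components of $G[V_\mu]$, and compatible (crossing-free) with the remaining edges of $E^+$ in that face, is precisely what must be proved and is nowhere argued; (3) the potential ``number of vertices enclosed by monochromatic cycles'' is not shown to decrease --- an exchange can destroy one hole while creating another, and hole-freeness of $\mathcal C$ constrains only cycles of $G$, not cycles of $G^*$ that pass through saturating edges, so it gives you no direct control over the quantity you want to decrease. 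This repair (or some global replacement for it) is the actual mathematical content of the ``if'' direction, and as written it is a plan rather than a proof. The same holds for the closing sentence on non-flat trees: the level-by-level induction must handle the interaction between the saturating edges of a cluster and the discs of its descendant clusters, and this is not addressed.
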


\cref{con:1} of~\cref{th:characterization} can be tested using any of the known linear-time planarity-testing algorithms. \cref{con:2} of~\cref{th:characterization} can be verified in linear time
as described by Di Battista and Frati \cite[Lemma~7]{df-ectefgsf-13}, by exploiting the linear-time algorithm for checking if an embedded, possibly non-flat, c-graph is hole-free presented by Dahlhaus~\cite{DBLP:conf/latin/Dahlhaus98}. Therefore, in the following we will assume that any c-graph satisfies these conditions and thus view the {\sc C-Planarity Testing} problem as one of testing~\cref{con:3}.

\subparagraph{Tree-width.}
A \emph{tree decomposition} of a graph $G$ is a tree $T$ whose nodes, called \emph{bags}, are labeled by subsets of vertices of $G$. For each vertex $v$ the bags containing $v$ must form a nonempty contiguous subtree of $T$, and for each edge $(u,v)$ of $G$ at least one bag of $T$ must contain both~$u$ and~$v$. The \emph{width} of the decomposition is one less than the maximum cardinality of any bag. The \emph{tree-width} $\tw(G)$ of $G$ is the minimum width of any of its tree decompositions.

\subparagraph{Cut-sets and duality.} Let $G = (V,E)$ be a connected graph and let $S$ be a subset of $V$. 
The partition $\{S,V\setminus S\}$ of $V$ is a \emph{cut} of $G$ and the set $(S,V\setminus S)$ of edges with an endpoint in $S$ and an endpoint in $V\setminus S$ is a \emph{cut-set} of $G$. Also, cut-set~$(S,V\setminus S)$ is a \emph{bond} if $G[S]$ and $G[V\setminus S]$ are both non-null and connected. 

For an embedded graph, the \emph{dual} $\dual(G)$ of $G$ is the planar multigraph that has a vertex~$v_f$, for each face $f$ of $G$, and an edge $(v_f,v_g)$, for each edge $e$ shared by faces $f$ and $g$. The edge $e$ is the \emph{dual edge} of $(v_f,v_g)$, and vice versa. 
Also, $\delta(G)$ is $2$-connected if and only if $G$ is $2$-connected.
\cref{fig:dual} shows a plane graph $G$ (black edges) and its dual $\dual(G)$ (purple edges); we will use these graphs as running examples throughout the paper.
The following duality is well known.

\begin{figure}
\begin{subfigure}{.19\textwidth}
\centering
\includegraphics[page=14, width=\textwidth]{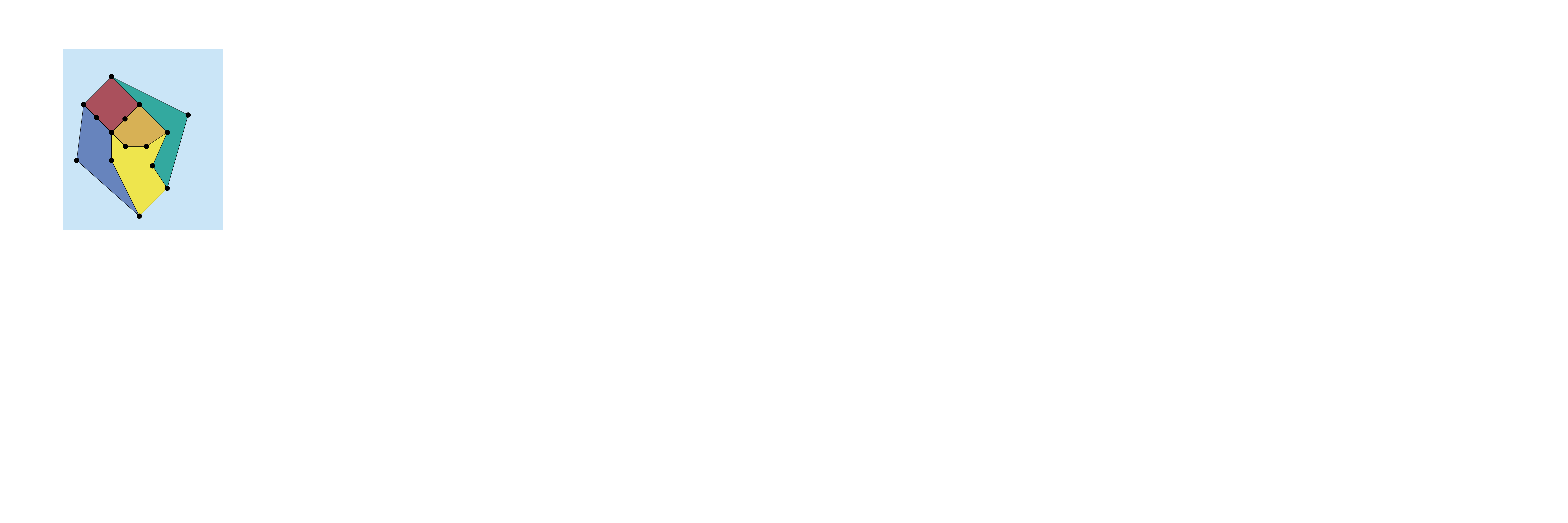}
\subcaption{}
\label{fig:dual}
\end{subfigure}
\begin{subfigure}{.40\textwidth}
\centering
\includegraphics[page=17, width=0.8\textwidth]{figures/graph2}
\subcaption{}
\label{fi:decomposition_a}
\end{subfigure}
\begin{subfigure}{.40\textwidth}
\centering
\includegraphics[page=22, width=0.8\textwidth]{figures/graph2}
\subcaption{}
\label{fi:decomposition_b}
\end{subfigure}
\caption{
(a) Running example: An embedded graph $G$ and its dual $\dual(G)$.
(b) A bond-carving decomposition $(D, \gamma)$ of the dual $\dual(G)$ of the graph $G$ in \cref{fig:dual}.
(c) The decomposition $(D, \gamma)$ where the vertices of $\dual(G)$ are replaced by the corresponding faces of~$G$.}
\label{fi:decomposition}
\end{figure}

\begin{lemma}[\cite{DBLP:books/daglib/0037866}, Theorem~14.3.1]\label{le:duality}
If $G$ is an embedded graph, then a set of edges is a cycle of $G$ if and only if their dual edges form a bond in $\dual(G)$.
\end{lemma}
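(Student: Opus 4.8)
The plan is to prove the two directions separately, using the Jordan curve theorem for the ``only if'' direction and bootstrapping it for the ``if'' direction.

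For the forward direction, suppose $C$ is a cycle of $G$. Regarded as a simple closed curve in the plane, $C$ splits the plane into a bounded interior region and an unbounded exterior region, and since each face of $G$ is a connected set disjoint from $C$, every face lies entirely in one of these two regions; moreover each region contains at least one face. Let $S$ be the set of vertices of $\dual(G)$ corresponding to the interior faces. The two faces incident to an edge $e$ of $G$ lie on opposite sides of $C$ exactly when $e\in C$ (crossing $e$ transversally switches the side of $C$ one is on, and this can happen only for edges of $C$), so the cut-set $(S,V(\dual(G))\setminus S)$ is precisely the set of dual edges of $C$. It then remains to check that this cut-set is a bond, i.e.\ that $\dual(G)[S]$ and $\dual(G)[V(\dual(G))\setminus S]$ are connected: any two interior faces can be joined by an arc lying in the open interior region and avoiding all vertices of $G$, such an arc crosses only edges of $G$ not in $C$, and each such edge has both of its incident faces in the interior, so the arc traces a walk in $\dual(G)[S]$; the same argument applies to the exterior.

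For the converse, let $B=(S,\bar S)$ be a bond of $\dual(G)$ and let $F$ be the set of primal edges dual to $B$; I want to show $F$ is a cycle. The key step is to prove that $F$ is $2$-regular at every vertex it touches. Fix a vertex $v$ of $G$ and list the faces incident to $v$ in rotational order around $v$ as a cyclic sequence; the edges of $F$ incident to $v$ are exactly the places where this sequence switches between a face in $S$ and a face in $\bar S$, so their number is even. If this number were at least four, then among the faces around $v$ there would be at least two maximal runs of faces in $S$ and at least two in $\bar S$; I would pick a path in $\dual(G)[S]$ between two different $S$-runs, realize it as an arc through interior points of edges and faces (hence avoiding the vertices of $G$), and close it up through $v$ into a closed curve which, by the Jordan curve theorem, separates a face of one $\bar S$-run from a face of another $\bar S$-run, forcing $\dual(G)[\bar S]$ to be disconnected and contradicting that $B$ is a bond. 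Hence every vertex touched by $F$ has degree exactly $2$ in $F$, so $F$ is a vertex-disjoint union of cycles, and it is a single cycle: otherwise the forward direction applied to one of these cycles would exhibit a nonempty cut-set of $\dual(G)$ properly contained in $B$, contradicting the minimality of $B$.

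I expect the main obstacle to be precisely this local analysis in the converse direction: translating ``$B$ is a minimal edge cut of $\dual(G)$'' into ``$F$ is $2$-regular at every vertex'' is where the fixed planar embedding is genuinely exploited, whereas everything else is a direct consequence of the Jordan curve theorem together with the definition of a bond. A more algebraic fallback, should the topological bookkeeping get unwieldy, is to use Euler's formula to verify that the edge-duality bijection carries the cycle space of $G$ onto the cut space of $\dual(G)$, and then to recall that cycles and bonds are exactly the supports of the minimal nonzero vectors of these two spaces, respectively.
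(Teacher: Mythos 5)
The paper does not actually prove this statement: it is imported verbatim as Theorem~14.3.1 of the cited textbook, so there is no in-paper argument to compare against. Your proof is essentially the classical one (and, as far as I can tell, the same strategy as in the cited source): the Jordan curve theorem for the direction cycle $\Rightarrow$ bond, a local parity/run analysis around each primal vertex plus a separation argument to show that a bond dualizes to a $2$-regular edge set, and minimality of the bond to rule out more than one cycle. The argument is correct. Two small points you should make explicit if you write it up. First, with the paper's definition of a bond (a cut-set $(S,V\setminus S)$ with both sides inducing connected non-null subgraphs), the ``minimality of $B$'' you invoke at the end is a consequence rather than the definition: you need the standard observation that such a cut-set properly contains no nonempty cut-set, because a path inside $\dual(G)[S]$ uses no edge of $(S,\bar S)$ and hence none of the smaller cut, so all of $S$ lies on one side of the smaller cut, and likewise for $\bar S$. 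Second, in the degree-$\geq 4$ case you should note that maximal runs of faces around $v$ alternate between $S$-type and $\bar S$-type, so each of the two sectors into which your closed curve divides a neighbourhood of $v$ contains at least one $\bar S$-face; and that the curve meets only $v$, interiors of $S$-faces, and interiors of primal edges both of whose incident faces are in $S$, so it is disjoint from any arc realizing a walk in $\dual(G)[\bar S]$, which is what actually forces $\dual(G)[\bar S]$ to be disconnected. Neither point is a gap in the idea, only in the bookkeeping.
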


\subparagraph{Carving-width.}
A \emph{carving decomposition} of a graph \mbox{$G = (V,E)$} is a pair $(D, \gamma)$, where $D$ is a rooted binary tree whose leaves are the vertices of $G$, and $\gamma$ is a function that maps the non-root nodes of $D$, called \emph{bags}, to cut-sets of $G$ as follows.
For any non-root bag $\nu$, let $D_\nu$ be the subtree of $D$ rooted at $\nu$ and let $\mathcal L_\nu$ be the set of leaves of $D_\nu$. Then, $\gamma(\nu) = (\mathcal L_\nu, V \setminus \mathcal L_\nu)$. The \emph{width} of a carving decomposition $(D,\gamma)$ is the maximum of $|\gamma(\nu)|$ over all bags $\nu$ in $D$. The \emph{carving-width} $\cw(G)$ of $G$ is the minimum width over all carving decompositions of~$G$.
The \emph{dual carving-width} is the carving-width of the dual of $G$.
A \emph{bond-carving decomposition} is a special kind of carving decomposition in which each cut-set is a bond of the graph; i.e., in a bond-carving decomposition every cut-set separates the graph into {two connected components}~\cite{DBLP:journals/talg/RueST14,DBLP:journals/combinatorica/SeymourT94}.

In this paper, we view a bond-carving decomposition of the vertices of the dual $\dual(G)$ of an embedded graph $G$ as a decomposition of the faces of $G$; see \cref{fi:decomposition_b}. A similar approach was followed in~\cite{DBLP:journals/ijcga/BiedlV13}.
In particular, due to the duality expressed by~\cref{le:duality},
the cut-sets~$\gamma(\nu)$ of the bags $\nu$ of $D$ correspond to cycles that can be used to {\em recursively partition} the faces of the primal graph, where these cycles are formed by the edges of the primal that are dual to those in each cut-set.

\subparagraph{Partitions.} 
Let $\mathcal Q = \{q_1, q_2, \dots, q_n\}$ be a ground set. A \emph{partition} of $\mathcal Q$ is a set $\{Q_1,\dots,Q_k\}$ of non-empty subsets $Q_i$'s of $\mathcal Q$, called \emph{parts}, such that $\mathcal Q= \bigcup^k_{i=1} Q_i$ and $Q_i \cap Q_j = \emptyset$, with $1 \leq i < j \leq k$. Observe that $k \leq |\mathcal Q|$.
Let now $\mathcal S = (s_1, s_2, \dots, s_n)$ be a \emph{cyclically-ordered set}, i.e., a set equipped with a circular ordering. Let $a$, $b$, and $c$ be three elements of $\mathcal S$ such that~$b$ appears after $a$ and before $c$ in the circular ordering of $\mathcal S$; we write $a \prec_b c$. A partition $P$ of $\mathcal S$ is \emph{crossing}, if there exist elements $a,c \in S_i$ and $b,d \in S_j$, with $S_i,S_j \in P$ and $i \neq j$, such that $a \prec_b c$ and $c \prec_d a$; and, it is \emph{non-crossing}, otherwise. We denote the set of all the non-crossing partitions of $\mathcal S$ by $\NC(\mathcal S)$. Note that, $|\NC(\mathcal S)|$ coincides with the \emph{Catalan number} $\mathscr{CAT}(n)$ of $n$, which satisfies  $\mathscr{CAT}(n) \leq 2^{2n}$.

\subsection{Relationship between Graph-Width Parameters and Connectivity}
In this section, we present reductions that shed light on the effect that the interplay between some notable graph-width parameters and the connectivity of the underlying graph have on the computational complexity of the {\sc C-Planarity Testing} problem. 

We will exploit recent results by Cortese and Patrignani, who proved the following:
\begin{enumerate}[(a)]
\item \label{prop:res1} Any $n$-vertex non-flat c-graph $\cgraph{}{}$ can be transformed into an equivalent $O(n \cdot h)$-vertex flat c-graph in quadratic time~\cite[Theorem~1]{DBLP:conf/gd/CorteseP18}, where $h$ is the height of $\cal T$.
\item \label{prop:res2} Any $n$-vertex flat c-graph can be turned into an equivalent $O(n)$-vertex \emph{independent flat c-graph}, i.e., a flat c-graph such that each non-root cluster induces an independent set, in linear time~\cite[Theorem~2]{DBLP:conf/gd/CorteseP18}.
\end{enumerate}
We remark that the reductions from~\cite{DBLP:conf/gd/CorteseP18} preserve the connectivity of the underlying graph.

\begin{theorem}\label{th:TreesOrForests}
Let $\cgraph{}{}$ be an $n$-vertex (flat) c-graph and let $h$ be the height of $\cal T$. In $O(n^2)$ time (in $O(n)$ time), it is possible to construct an $O(n\cdot h)$-vertex ($O(n)$-vertex) independent flat c-graph $\cgraph{}{\prime}$ that is equivalent to $\mathcal C$ such that: 
\begin{enumerate}[(i)]
\item \label{prop:pathwidth} $G'$ is a collection of stars or
\item \label{prop:treewidth} $G'$ is a tree.
\end{enumerate}
\end{theorem}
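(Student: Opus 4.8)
The plan is to first reduce to the independent‑flat case and then replace the underlying graph by an extremely sparse graph of the prescribed shape, verifying at each step that the transformation preserves c‑planarity. If $\mathcal C$ is not flat I would apply the first reduction of Cortese and Patrignani recalled above, which in $O(n^2)$ time yields an equivalent flat c‑graph on $O(n\cdot h)$ vertices, and then the second one, which in additional linear time yields an equivalent \emph{independent} flat c‑graph — of size $O(n\cdot h)$, or $O(n)$ when the input is already flat. Since both reductions preserve connectivity, this accounts for the stated running times and sizes, and henceforth I may assume that $\mathcal C(G,\mathcal T)$ is independent flat, so that every edge of $G$ joins vertices of two distinct clusters.

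\textbf{The gadget constructions.} For the tree case I would fix an arbitrary spanning tree $T$ of $G$ and, for every non‑tree edge $e=(u,v)$ with $u$ in cluster $\mu$ and $v$ in cluster $\nu$, add a new pendant vertex $\ell_e$ adjacent to $u$ and place $\ell_e$ into cluster $\nu$. In the resulting c‑graph $\mathcal C'(G',\mathcal T')$ the graph $G'$ is a tree (pendants attached to $T$), every non‑root cluster is still an independent set, $|V(G')| = |V(G)| + |E(G)\setminus E(T)| = O(|V(G)|)$ by planarity of $G$, and the construction is linear time. For the star‑forest case I would additionally subdivide every edge of $G'$ once, placing each subdivision vertex into its own new singleton cluster, and then split every original vertex of degree $d$ into $d$ pendant copies — one per incident edge — each copy remaining in the original cluster; now $G'$ is the disjoint union, over the edges of the tree, of three‑vertex paths, i.e.\ a collection of stars, the non‑root clusters are still independent sets, and the size and time bounds remain linear. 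The ``only if'' direction of equivalence is then a direct surgery on a c‑planar drawing of $\mathcal C$: a non‑tree edge $e=(u,v)$ is cut at a point of its drawing lying inside $D(\nu)$ near the boundary, keeping the sub‑arc emanating from $u$, which still crosses each cluster boundary at most once and now terminates at $\ell_e\in D(\nu)$; subdivisions are inserted as tiny singleton disks along edges; and a vertex split is realised by replacing the vertex with several points next to its former location following its rotation.

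\textbf{The crux: the ``if'' direction.} For this direction I would rely on \cref{th:characterization}: since $G$ is planar and $\mathcal C$ is hole‑free by assumption, it suffices to convert a planar saturation of $\mathcal C'$ that witnesses c‑connectivity into a planar saturation of $\mathcal C$ with the same property. The intended mechanism is that a cluster of $\mathcal C'$ holding a remote pendant $\ell_e$ (for a non‑tree edge $e=(u,v)$) is connected in the saturation, so $\ell_e$ is linked to $v$ by a path internal to that cluster; contracting the edge $(u,\ell_e)$ along this connection — and, in the star‑forest case, smoothing the subdivision vertices and identifying the split copies of each original vertex — should yield a super c‑graph of $\mathcal C$. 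Two points must then be checked with care: that planarity is preserved, which it is because identifying the split copies reconstructs precisely the planar graph $G$; and, crucially, that the saturating edges surviving inside each cluster remain co‑facial in $G$ — hence are legitimate candidate saturating edges of $\mathcal C$ — while still keeping every cluster connected. I expect this face‑tracking argument, relating the face structure of $G'$ to that of $G$ under un‑subdivision, un‑splitting, and pendant contraction, together with checking that hole‑freeness is inherited, to be the only genuinely delicate part of the proof; everything else is routine bookkeeping.
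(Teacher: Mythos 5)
Your first step (reducing to an independent flat c-graph via the two Cortese--Patrignani reductions) matches the paper, and your ``only if'' direction is fine. The gap is in your gadget for eliminating cycle edges, and it is not just a missing verification --- the construction itself does not preserve c-planarity. Replacing a non-tree edge $e=(u,v)$ by a pendant $\ell_e$ at $u$ placed into the \emph{existing} cluster $\nu$ of $v$ only forces the saturation to connect $\ell_e$ to \emph{some} vertex of $\nu$, not to $v$; so the $u$--$v$ barrier that $e$ provided (and the two faces it separated) need not be restored, and saturating edges that were infeasible in $G$ become feasible in $G'$. Concretely, let $G$ be the $6$-cycle $v_1,\dots,v_6$ with clusters $\{v_1,v_4\}$, $\{v_2,v_5\}$, $\{v_3,v_6\}$. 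This c-graph is not c-planar: each cluster needs one saturating chord, the three chords pairwise interleave, and there are only two faces, so two of them must cross. Your reduction (spanning tree $=$ the path $v_1\cdots v_6$, non-tree edge $(v_6,v_1)$) produces the path $v_1\cdots v_6\,\ell$ with clusters $\{v_1,v_4,\ell\}$, $\{v_2,v_5\}$, $\{v_3,v_6\}$. Writing the single face's boundary walk as the cyclic sequence $v_1,v_2,v_3,v_4,v_5,v_6,\ell,v_6,v_5,v_4,v_3,v_2$ (positions $1,\dots,12$), the chords $(1,4)$, $(7,10)$, $(12,5)$, $(6,11)$ are pairwise non-crossing and connect all three clusters --- the cheat being that $\ell$ attaches to the \emph{second} occurrence of $v_4$ rather than to $v_1$. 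So $\mathcal C'$ is c-planar by \cref{th:characterization} while $\mathcal C$ is not, and the equivalence fails. The same objection applies to your vertex-splitting step in the star case: copies of a split vertex lie in the old cluster and may be reconnected through arbitrary cluster-mates. Your own closing remarks (surviving saturating edges must stay co-facial in $G$; clusters must stay connected after deleting $\ell_e$) are precisely the points that cannot be repaired for this gadget.

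The paper sidesteps this by making the reconnection \emph{forced}: it subdivides $e$ into $u,u_e,v_e,v$, deletes the middle edge $(u_e,v_e)$, and places $u_e,v_e$ into a \emph{brand-new} two-vertex cluster $\mu_e$ hung off the root. Connectivity of $\mu_e$ then admits exactly one candidate saturating edge, namely $(u_e,v_e)$, which re-creates the deleted edge as a subdivision of $e$ and hence restores the face structure; both directions of the equivalence become immediate (replace the path $(u,u_e,v_e,v)$ by $e$ and vice versa). If you want to salvage your approach, the fix is to make the pendant's cluster fresh and two-vertex (one dummy on each side of the broken edge), which is essentially the paper's gadget.
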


\begin{proof}
Let $\cgraph{}{}$ be an $n$-vertex (flat) c-graph. By the results~\ref{prop:res1} and~\ref{prop:res2} above, we can construct an $O(n\cdot h)$-vertex ($O(n)$-vertex) independent flat c-graph $\cgraph{}{+}$ equivalent to $\cal C$ in $O(n^2)$ time (in $O(n)$ time). Note that, $G^+$ only contains \mbox{inter-cluster edges.}

Let $e=(u,v)$ be an edge of $G^+$. Consider a c-graph $\cgraph{}{1}$ obtained from $\mathcal C^+$ as follows. First, subdivide the edge $e$ with two dummy vertices $u_e$ and $v_e$ to create edges $(u,u_e)$, $(u_e,v_e)$, and $(v_e,v)$. Then, delete the edge $(u_e,v_e)$. Finally, assign $u_e$ and $v_e$ to a new cluster $\mu_e$, and add~$\mu_e$ as a child of the root of the tree $\mathcal T^+$. 
To construct $\mathcal C'$ in case~(\ref{prop:pathwidth}), we perform the above transformation for all the edges of $G^+$. To construct $\mathcal C'$ in case~(\ref{prop:treewidth}), as long as the graph contains a cycle, we perform the above transformation for an edge $e$ of such a cycle. Since the construction of $C'$ from $C^+$ can be done in linear time both in case~(\ref{prop:pathwidth}) and~(\ref{prop:treewidth}), the running time follows. 

To show the correctness of the reduction, we prove that $\mathcal C^1$ is c-planar if and only if $\mathcal C^+$ is c-planar. 
Suppose first that $\mathcal C^+$ is c-planar and thus by~\cref{th:characterization}, it has a c-connected super c-graph $\cgraph{con}{+}$ with $G^+_{con}$ planar.  We now construct a c-connected super c-graph $\cgraph{con}{1}$ of $\mathcal C^1$ with $G^1_{con}$ planar as follows. First, initialize $G^1_{con} = G^+_{con}$ and $\mathcal T^1_{con} = \mathcal T^1$. As $G^+_{con}$ is a super graph of $G^+$, the edge $e$ also belongs to $G^+_{con}$ (and thus to $G^1_{con}$). Then, subdivide the edge $e$ with the  vertices $u_e$ and $v_e$ to create edges $(u,u_e)$, $(u_e,v_e)$, and $(v_e,v)$ in $G^1_{con}$. Clearly, $G^1_{con}$ is a super graph of $G^1$; also, since $e$ is an inter-cluster edge, all the clusters of $\mathcal T^+_{con}$ are still connected in $G^1_{con}$. Also, the cluster $\mu_e$ is connected in $G^1_{con}$ as the edge $(u_e,v_e)$ is present in $G^1_{con}$. Therefore, $\mathcal C^1_{con}$ is a c-connected super c-graph of $\mathcal C^1$ and thus, by~\cref{th:characterization}, $\mathcal C^1$ is c-planar.

Suppose now that $\mathcal C^1$ is c-planar and thus, by~\cref{th:characterization}, it has a c-connected super c-graph $\cgraph{con}{1}$ with $G^1_{con}$ planar. We now construct a c-connected super c-graph $\cgraph{con}{+}$ of $\mathcal C^+$ with $G^+_{con}$ planar as follows. First, initialize $G^+_{con} = G^1_{con}$ and $\mathcal T^+_{con} = \mathcal T^+$. As $\mathcal C^1_{con}$ is c-connected, the cluster $\mu_e$ is also connected, i.e., we have the edge $(u_e,v_e)$ in $G^1_{con}$. Consider the path $(u, u_e, v_e, v)$ and replace it with the edge $(u,v)$. Clearly, all the clusters of $\mathcal T^+_{con}$ are still connected in $G^+_{con}$, and $G^+_{con}$ is still planar. Therefore, $\mathcal C^+_{con}$ is a c-connected super c-graph of $\mathcal C^+$ and thus, by~\cref{th:characterization}, $\mathcal C^+$ is c-planar.
\end{proof}

We point out that by applying the reduction in the above proof without enforcing a specific embedding, \cref{th:TreesOrForests} also holds for general instances of the {\sc C-Planarity Testing} problem, i.e., non-embedded c-graphs. Moreover, since the reduction given in~\cite{DBLP:conf/gd/CorteseP18} also works for disconnected instances, applying the reduction of \cref{th:TreesOrForests} for case~(\ref{prop:pathwidth}) to a general disconnected instance $\cgraph{}{}$ would result in an equivalent independent flat c-graph $\cgraph{}{\prime}$ such that $G'$ is a collection of stars. An immediate, yet important, consequence of this discussion is that an algorithm with running time in $O(r(n))$ for flat instances whose underlying graph is a collection of stars would result in an algorithm with running time in $O(r(n))$ for flat instances and in $O(r(n^2) + n^2)$ for general, possibly non-flat, instances. 

The proof of the following lemma, which will turn useful in the following sections, is based on the duality expressed by~\cref{le:duality}.

\begin{figure}[t]
\centering
{\includegraphics[page=9,height=.28\columnwidth]{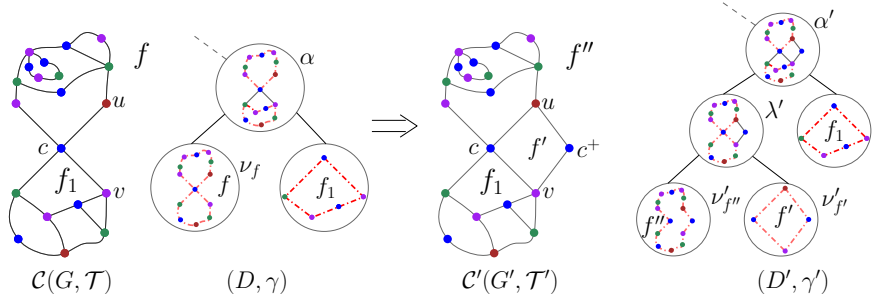}}
\caption{Reduction of~\cref{le:2connected} focused on cutvertex $c$. 
The transformation of $(D,\gamma)$ into $(D',\gamma')$ is shown. 
The red dashed edges are dual \mbox{to those in the cut-set of each bag.}
}\label{fi:2connected}
\end{figure}

\begin{lemma}\label{le:2connected}
Given an $n$-vertex c-graph $\cgraph{}{}$ and a carving decomposition $(D,\gamma)$ of $\dual(G)$ of width $\omega$, in $O(n)$ time, it is possible to construct an $O(n)$-vertex c-graph  $\cgraph{}{\prime}$ that is equivalent to $\mathcal C$ such that $G'$ is $2$-connected, and a carving decomposition $(D',\gamma')$ of $\dual(G')$ of width $\omega' = \max (\omega,4)$.
\end{lemma}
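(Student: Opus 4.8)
The plan is to repeatedly eliminate cutvertices of $G$ by a local surgery that does not change c-planarity, while simultaneously updating the carving decomposition of the dual. First I would observe that, since we assume $G$ is connected (by the standing remark), it suffices to describe how to handle a single cutvertex $c$ and argue that the operation reduces the number of cutvertices (or, more carefully, the number of block-cutvertex incidences) without creating new ones; iterating then yields a $2$-connected graph, and because each step adds only a constant number of vertices and edges, after $O(n)$ steps the final graph $G'$ has $O(n)$ vertices.

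For the surgery at a cutvertex $c$: let $f$ be a face of $G$ that is incident to $c$ at (at least) two distinct ``corners'' — that is, $c$ appears more than once on the boundary walk of $f$, which must happen at a cutvertex. Pick two consecutive edges $e_1, e_2$ on the boundary of $f$ around one such corner of $c$, say with other endpoints $x_1$ and $x_2$. I would split $c$ into two copies $c_1, c_2$, reroute the edges incident to $c$ in one of the two ``sides'' determined by the corner of $f$ to $c_2$ and the rest to $c_1$, and add an edge $(c_1,c_2)$ drawn inside $f$; finally assign $c_1$ and $c_2$ to a new child cluster of the root of $\mathcal T$, so that $(c_1,c_2)$ is an intra-cluster edge. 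This is the same kind of ``split-a-cutvertex-and-connect-with-a-fresh-singleton-edge-cluster'' move used in the proof of \cref{th:TreesOrForests}; equivalence then follows from \cref{th:characterization} by exactly the same argument: a c-connected planar super c-graph of one instance is obtained from that of the other by either inserting or contracting the path $(c_1, c_2)$ (contraction is safe because $c_1,c_2$ lie in their own cluster, and insertion is safe because the new cluster $\{c_1,c_2\}$ is connected precisely by the added edge), and all pre-existing clusters keep their connectivity because we only touched the embedding locally inside a single face.

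For the dual side I would use \cref{le:duality}: the edge set of each cut-set $\gamma(\nu)$ is dual to a cycle of $G$, and the surgery at $c$ changes $G$ only inside the face $f$ — it subdivides $f$ into two faces $f_1, f_2$ by the new edge $(c_1,c_2)$, and leaves every other face combinatorially unchanged. In the dual, this replaces the vertex $v_f$ by two vertices $v_{f_1}, v_{f_2}$ joined by the dual edge of $(c_1,c_2)$, and redistributes the dual edges formerly at $v_f$ between $v_{f_1}$ and $v_{f_2}$ according to which side of the corner they lie on. To build $(D',\gamma')$: take $D$, replace the leaf $v_f$ by an internal node with two leaf children $v_{f_1}, v_{f_2}$, and leave the rest of the tree shape intact. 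The cut-set at the parent of $v_{f_1}$ (i.e. the edge $(v_{f_1},v_{f_2})$-cut separating $\{v_{f_1}\}$ from everything else, and symmetrically for $v_{f_2}$) has size equal to $\deg(v_{f_i})$ in $\dual(G')$; since $f_i$ is bounded by at least three edges — two of the original boundary edges at the corner plus the new edge $(c_1,c_2)$ — we can carry out the corner choice so that each side has at least the new edge plus one old edge, giving $\deg(v_{f_i})\ge 2$; the only genuinely new bound needed is that these two new bags have size at most $4$, which holds because we can always pick the corner so that one of the two resulting faces has length exactly $3$ (take $e_1,e_2$ to be two boundary edges consecutive at the corner, plus $(c_1,c_2)$), and for the other face we instead argue that every cut-set already present in $D$ can only shrink or stay the same in size after the split except for the single edge $v_f$ was replaced by. More precisely, for every non-root bag $\nu$ of $D$ other than $v_f$ itself, $\gamma'(\nu)$ is obtained from $\gamma(\nu)$ by at most replacing the dual edge of a boundary edge of $f$ with itself, so $|\gamma'(\nu)|\le|\gamma(\nu)|\le\omega$; the two brand-new bags have size $3$ and at most $\omega+1$, but by choosing the corner inside the face carefully (splitting a length-$\ge 4$ face balanced, or noting length-$3$ faces are already fine) one bounds the second by $\max(\omega,4)$ as well. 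Iterating over all cutvertices and cleaning up gives the claimed $(D',\gamma')$ of width $\max(\omega,4)$ in total $O(n)$ time.

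The main obstacle I anticipate is the bookkeeping in the last paragraph: showing that the split can always be localized to a single face in a way that (a) actually kills a cutvertex without creating a new one — one must argue that the corner of $f$ at $c$ genuinely separates the edges around $c$ into two nonempty groups that live in different blocks, so that after splitting both $c_1$ and $c_2$ have degree $\ge 1$ and the block structure strictly simplifies — and (b) keeps every cut-set of the carving decomposition of size at most $\max(\omega,4)$. Part (b) is delicate because a single split event can, in principle, touch many cut-sets if the face $f$ is large; the key point making it work is that splitting a face only ever replaces one dual vertex by two adjacent dual vertices and partitions that vertex's incident dual edges, so no pre-existing cut-set grows, and only the two fresh bags need a separate estimate, which is exactly where the constant $4$ enters. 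I would write part (a) using the block-cutvertex tree of $G$ and a potential-function/induction argument on its size, and part (b) as a short case analysis on the length of the split face.
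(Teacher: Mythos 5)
Your surgery is genuinely different from the paper's, and as written it fails on two independent counts. First, it does not produce a $2$-connected graph: if $c$ is a cutvertex and you split its rotation at the two corners of $f$ into the edges going to $c_1$ and those going to $c_2$, the two sides lie in different blocks that shared only $c$, so the new edge $(c_1,c_2)$ is a \emph{bridge} and both $c_1$ and $c_2$ are again cutvertices; your potential function (block--cutvertex incidences) actually increases. (Relatedly, after the split the old boundary walk of $f$ breaks into two closed walks, so drawing $(c_1,c_2)$ inside $f$ yields a single face traversing that edge twice rather than ``subdividing $f$ into two faces''; the dual update you describe, and the $\max(\omega,4)$ bookkeeping built on it, therefore does not go through -- note also that even on your own accounting one new bag has size up to $\omega+1$.) Second, equivalence fails because you move $c$ out of its original cluster $\mu$ into a fresh cluster $\{c_1,c_2\}$: the vertex $c$ may be the only thing connecting $\mu$. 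Concretely, let $G$ be a wheel with center $c$ and rim $a_1,\dots,a_5$, with a triangle $(c,b,b')$ embedded inside the face $(c,a_3,a_4)$, and let $\mu=\{c,a_1,b\}$ with all other vertices in singleton clusters. Then $G[V_\mu]$ is the connected path $a_1$--$c$--$b$ and $\mathcal C$ is c-planar; after your split, $\mu'=\{a_1,b\}$ induces no edges and $a_1,b$ share no face of the new graph, so no saturating edge can reconnect $\mu'$ and the instance becomes negative. The analogy with \cref{th:TreesOrForests} is misleading: there the subdivided object is an \emph{inter-cluster edge} of an independent flat c-graph, so no cluster's internal connectivity is touched, whereas here $c$ carries intra-cluster connectivity.

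The paper's construction avoids all of this by leaving $c$ and its cluster membership untouched: it picks two edges $(u,c)$, $(v,c)$ in \emph{different} blocks on a common face $f$ and embeds a new path $(u,c^{+},v)$ inside $f$, with $c^{+}$ added to the \emph{same} cluster as $c$. The cycle $(u,c^{+},v,c)$ merges the two blocks (so the block count strictly decreases and no bridge is created); equivalence holds because any saturating edge $(c,x)$ drawn in $f$ can be rerouted as $(x,c^{+})$ plus the always-available candidate edge $(c^{+},c)$ inside the new quadrilateral face $f'$, and conversely. On the dual side, $f$ is genuinely split into $f'$ (degree exactly $4$, giving the constant in $\max(\omega,4)$) and $f''$ (whose cut-set has exactly the same size as that of $f$, since two boundary edges are swapped for two others), so the leaf of $D$ for $f$ is simply replaced by an internal bag with two leaf children and no pre-existing cut-set changes size. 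If you want to salvage a vertex-splitting variant, you would at minimum have to keep $c_1,c_2$ in $c$'s original cluster and still find a way to restore $2$-connectivity across the bridge $(c_1,c_2)$, at which point you have essentially reconstructed the paper's path gadget.
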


\begin{proof}
We construct $\cgraph{}{\prime}$ as follows. Let $\beta(G) < n$ be the number of blocks of $G$. 

Let $c$ be a cutvertex of $G$, let $\mu$ be the cluster that is the parent of $c$ in $\mathcal T$, and let~$(u,c)$ and~$(v,c)$ be two edges belonging to different blocks of $G$ that are incident to the same face~$f$ of $G$. Consider the c-graph $\cgraph{}{+}$ obtained from $\mathcal C$ by embedding a path $(u,c^+,v)$ inside $f$, where $c^+$ is a new vertex that we add as a child of $\mu$; see~\cref{fi:2connected}. We denote by $f'$ the face of~$G^+$ bounded by the cycle $(u,c^+,v,c)$ and by $f''$ the other face of $G^+$ incident~to~$c^+$. Clearly, this augmentation can be done in $O(1)$ time and $G^+$ contains $\beta(G)-1$ blocks. Also, $\mathcal C$ and $\mathcal C^+$ are equivalent. This is due to the fact that any saturating edge $(c,x)$ incident to $c$ and lying in $f$ (of a c-connected c-planar super c-graph of $\mathcal C$) can be replaced by two saturating edges $(x,c^+)$ lying in $f''$ and $(c^+,c)$ lying in $f'$ (of a c-connected c-planar super c-graph of $\mathcal C^+$), and vice versa.

We now show how to modify the carving decomposition $(D,\gamma)$ of $\dual(G)$ of width $\omega$ into a carving decomposition $(D^+,\gamma^+)$ of $\dual(G^+)$ of width $\omega^+ = \max(\omega,4)$ in $O(1)$ time.
Consider the leaf bag $\nu_f$ of $D$ corresponding to face~$f$ and let $\alpha$ be the parent of $\nu_f$ in $D$. 
We construct~$D^+$ from~$D$ as follows. First, we initialize $(D^+,\gamma^+) = (D, \gamma)$; in the following, we denote by~$\nu'$ the bag of $D^+$ corresponding to the bag~$\nu$ of~$D$. We remove $\nu'_f$ from~$D^+$, add a new non-leaf bag~$\lambda'$ as a child of~$\alpha'$ and two leaf bags $\nu'_{f'}$, corresponding to face~$f'$, and~$\nu'_{f''}$, corresponding to~face $f''$, as children of~$\lambda'$; refer to~\cref{fi:2connected}.
Further, we have $\gamma^+(\nu'_{f'})=\{(u,c),(u,c^+),(v,c),(v,c^+)\}$, 
$\gamma^+(\nu'_{f''})= \gamma(\nu'_f) \setminus \{(u,c),(v,c)\} \cup \{(u,c^+),(v,c^+)\}$,
$\gamma^+(\lambda') = \gamma(\nu'_f)$, and $\gamma^+(\nu') = \gamma(\nu)$, for any other bag $\nu$ belonging to both $D^+$ and $D$. In particular, the size of the edge-cuts defined by all the bags different from $\nu'_{f'}$ stays the same, while the size of the edge-cut of $\nu'_{f'}$ is $4$.
Therefore, $(D^+,\gamma^+)$ is a carving decomposition of~$\dual(G^+)$ of width~$\omega^+ = \max(\omega,4)$.

Repeating the above procedure, eventually yields a $2$-connected c-graph $\cgraph{}{\prime}$, with $|V(G')| = n + \beta(G)-1 = O(n)$, that is equivalent to $\mathcal C$ and a carving decomposition $(D',\gamma')$ of $\dual(G')$ of width $\omega' = \max(\omega,4)$. Since each execution of the above procedure takes $O(1)$ time and since the cutvertices and the blocks of $G$ can be computed in $O(n)$ time~\cite{DBLP:journals/cacm/HopcroftT73}, we have that $\mathcal C'$ and $(D',\gamma')$ can be constructed in $O(n)$ time. This concludes the proof. 
\end{proof}

\section{A Dynamic-Programming Algorithm for Flat Instances}\label{se:algorithm}

In this section, we present an FPT algorithm for the {\sc C-Planarity Testing} problem of flat c-graphs parameterized by the dual carving-width. We first describe a dynamic-programming algorithm to test whether a $2$-connected flat c-graph $\mathcal C$ is c-planar, by verifying whether $\mathcal C$ satisfies \cref{con:3} of~\cref{th:characterization}.
Then, by combining this result and~\cref{le:2connected}, we extend the algorithm to simply-connected instances. 

\subparagraph{Basic operations.}
Let $\cgraph{}{}$ be a flat c-graph. A partition $\{ S_1,\dots,S_k \}$ of $V' \subseteq V(G)$ is \emph{good} if, for each part $S_i$, there exists a non-root cluster $\mu$ such that all the vertices in $S_i$ belong to $\mu$; also, we say that the part $S_i$ \emph{belongs} to the cluster $\mu$. Further, a partition of a cyclically-ordered set ${\cal S} \subseteq V(G)$ is \emph{admissible} if it is both good and non-crossing. We define the binary operator $\uplus$, called \emph{generalized union}, that given two good partitions $P'$ and $P''$ of ground sets $\mathcal Q'$ and $\mathcal Q''$, respectively, returns a good partition $P^* = P' \uplus P''$ of $\mathcal Q' \cup \mathcal Q''$ obtained as follows. Initialize $P^* = P' \cup P''$. Then, as long as there exist $Q_i,Q_j \in P^*$ such that $Q_i \cap Q_j \neq \emptyset$, replace sets $Q_i$ and $Q_j$ with their union $Q_i \cup Q_j$ in~$P^*$. We have the following.

\begin{figure}[t]
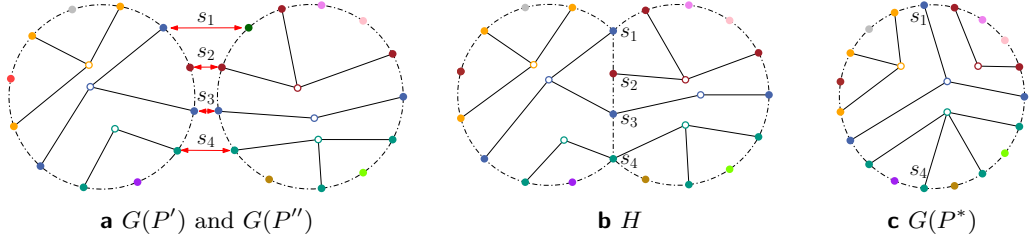

\centering
\begin{subfigure}{.38\textwidth}
    \centering
  \includegraphics[page=25,height=0.48\textwidth]{figures/graph2}
  \subcaption{$G(P')$ and $G(P'')$}
  \label{fi:projection_a}
\end{subfigure}
\begin{subfigure}{.38\textwidth}
    \centering
  \includegraphics[page=26,height=0.48\textwidth]{figures/graph2}
  \subcaption{$H$}
  \label{fi:projection_b}
\end{subfigure}
\begin{subfigure}{.19\textwidth}
    \centering
  \includegraphics[page=27,height=0.96\textwidth]{figures/graph2}
  \subcaption{$G(P^*)$}
  \label{fi:projection_c}
\end{subfigure}
\caption{Illustrations for the definition of bubble merge.}\label{fi:projection}
\end{figure}

\begin{lemma}\label{le:linear-union}
$P^* = P' \uplus P''$ can be computed in $O(|\mathcal Q'|+|\mathcal Q''|)$ time.
\end{lemma}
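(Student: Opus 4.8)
The plan is to compute $P^* = P' \uplus P''$ using a union-find (disjoint-set) data structure over the ground set $\mathcal Q' \cup \mathcal Q''$, which lets us realize the repeated merging of intersecting parts without ever explicitly searching for a pair $Q_i, Q_j$ with $Q_i \cap Q_j \neq \emptyset$. First I would observe that, since $P'$ and $P''$ are partitions of $\mathcal Q'$ and $\mathcal Q''$ respectively, the only way two parts can intersect after initializing $P^* = P' \cup P''$ is through an element $q \in \mathcal Q' \cap \mathcal Q''$: such a $q$ lies in exactly one part $Q' \in P'$ and one part $Q'' \in P''$, forcing $Q'$ and $Q''$ to be unified. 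Transitively, two elements end up in the same part of $P^*$ if and only if they are connected in the graph on $\mathcal Q' \cup \mathcal Q''$ whose edges link, within each input part, all elements to a fixed representative of that part. Hence $P^*$ is exactly the partition into connected components of this auxiliary structure.

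The concrete steps are: (1) create a union-find structure with one singleton per element of $\mathcal Q' \cup \mathcal Q''$, identifying duplicates so each element of the union appears once — this is $O(|\mathcal Q'| + |\mathcal Q''|)$ if the elements are indexable (e.g., vertices of $G$, which can be given integer labels in a preprocessing pass, or via a single bucket sort); (2) for each part $Q_i \in P'$, pick its first element as a pivot and call $\textsf{union}$ on the pivot with every other element of $Q_i$; (3) do the same for every part in $P''$; (4) scan $\mathcal Q' \cup \mathcal Q''$ once more, bucketing elements by their union-find root, and output one part per nonempty bucket. Each element is touched a constant number of times across steps (2)–(4), so the total number of union-find operations is $O(|\mathcal Q'| + |\mathcal Q''|)$.

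For the running-time bound I would note that a union-find sequence of $m$ operations on $n$ elements runs in $O(m\,\alpha(n))$ time with path compression and union by rank; to get the clean linear bound claimed in the lemma (no inverse-Ackermann factor), I would invoke the fact that the structure here is built in a restricted way — all unions precede all relevant finds, or equivalently the union pattern is known in advance as a forest — so that Gabow–Tarjan-style linear-time union-find, or a direct two-pass relabeling, suffices; alternatively one may simply absorb the $\alpha$ factor, but the statement as phrased suggests the authors intend the sharper bound. I would then verify that the output is a \emph{good} partition: each resulting part is a union of parts of $P'$ and $P''$, and since both inputs are good (every part contained in a single non-root cluster) and a merge only happens when parts share a vertex — hence belong to the same cluster — every part of $P^*$ is contained in a single cluster.

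The main obstacle is the bookkeeping needed to turn the element set $\mathcal Q' \cup \mathcal Q''$ into a contiguous index range in linear time (so that the union-find arrays can be allocated and duplicates detected without sorting by comparison) and to justify the linear — rather than $O(m\,\alpha(n))$ — running time; everything else is a routine translation of "merge all parts that share an element" into connected-component computation. If one is willing to state the bound with the inverse-Ackermann factor the proof is immediate, so the real content is choosing a representation of the ground set (integer vertex labels fixed once for $G$) under which both the deduplication and the union-find become genuinely linear.
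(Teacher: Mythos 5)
Your proposal is correct and rests on the same key observation as the paper's proof: $P^*$ is exactly the partition into connected components of an auxiliary structure in which each input part links its elements together. The paper realizes this directly by building a bipartite element--part graph (each part inducing a star) and running BFS, which gives the clean linear bound immediately and sidesteps the inverse-Ackermann bookkeeping you spend your last paragraph addressing.
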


\begin{proof}
We show how to compute $P^*$ in $O(|\mathcal Q'|+|\mathcal Q''|)$ time. Let $P' = \{Q'_1,\dots,Q'_{k'}\}$ and $P'' = \{Q''_1,\dots,Q''_{k''}\}$. We construct an undirected graph $G^*$ as follows. First, for each element $a \in \mathcal Q' \cup \mathcal Q''$, we add a vertex $v_a$ to $V(G^*)$. Then, for each part $Q'_i \in P'$ (resp.\ $Q''_i \in P''$), we add a vertex $v_{Q'_i}$ (resp.\ $v_{Q''_i}$) to $V(G^*)$. Finally, for each element $x \in Q'_i$ (resp.\ $x \in Q''_i$), we add an edge $(v_x, v_{Q'_i})$ (resp.\ $(v_x, v_{Q''_i})$) to $E(G^*)$. As $k' \leq |\mathcal Q'|$ and $k'' \leq |\mathcal Q''|$, $|V(G^*)| \leq |\mathcal Q'|+|\mathcal Q''|+k'+k'' \leq 2(|\mathcal Q'|+|\mathcal Q''|)$ and $E(G^*) \leq |\mathcal Q'|+|\mathcal Q''|$. 

Observe that, for every part $Q'_i \in P'$ (resp.\ $Q''_i \in P''$), the set $V_{Q'_i} = \{v_{Q'_i}\} \cup \{v_x | x \in Q'_i\}$ (resp.\ $V_{Q''_i} = \{v_{Q''_i}\} \cup \{v_x | x \in Q''_i\}$) induces a star in $G^*$. Therefore, there exist~$Q'_i \in P'$ and $Q''_j \in P''$ such that $Q'_i \cap Q''_j \neq \emptyset$ if and only if $V_{Q'_i \cup Q''_j} = \{v_{Q'_i}\} \cup \{v_{Q''_j}\} \cup \{v_x | x \in Q'_i \cup Q''_j\} $ induces a connected subgraph of $G^*$. We now perform a breadth-first search (BFS) in~$G^*$ to find all the connected components of $G^*$. Let $C= \{C_1, C_2, \dots, C_l\}$ be the set of all the connected components. For every $C_i \in C$, we construct a new part $Q^*_i$ by removing from $C_i$ the vertices corresponding to parts in $P'$ and $P''$. The set $\{Q^*_1, \dots, Q^*_l\}$ is our required $P^*$. Since the BFS runs in $O(|V(G^*)|+|E(G^*)|)$ time, the lemma follows.
\end{proof}
 
Let $P$ be a good partition of the ground set $\mathcal{Q}$ and let $\mathcal{Q}' \subset \mathcal{Q}$. The \emph{projection} of $P$ \emph{onto}~$\mathcal{Q}'$, denoted as $P|_{\mathcal{Q}'}$, is the good partition of $\mathcal{Q}'$ obtained from $P$ by first replacing each part $S_i \in P$ with $S_i \cap \mathcal{Q}'$ and then removing empty parts, if any. 

An admissible partition $P$ of a cyclically-ordered set $\mathcal S$ can be naturally associated with a $2$-connected plane graph $G(P)$ as follows. The outer face of $G(P)$ is a cycle $C(P)$ whose vertices are the elements in $\mathcal S$ and the clockwise order in which they appear along $C(P)$ is the same as in~$\mathcal S$. Also, for each part $S_i \in P$ such that $|S_i| \geq 2$, graph $G(P)$ contains a vertex $v_i$ in the interior of $C(P)$ that is adjacent to all the elements in $S_i$, i.e., removing all the edges of $C(P)$ yields a collection of stars, whose central vertices are the $v_i$'s, and isolated vertices. We say that $G(P)$ is the \emph{cycle-star} associated with $P$; see, e.g., \cref{fi:projection_c}.

We also extend the definitions of \emph{generalized union} and \emph{projection} to admissible partitions by regarding the corresponding cyclically-ordered sets as unordered.

Let $P'$ and $P''$ be two admissible partitions of cyclically-ordered sets $\mathcal S'$ and $\mathcal S''$, respectively, with the following properties (where $\mathcal S_\cap = \{s_1, s_2, \dots, s_k\}$ denotes the set of elements that are common to $\mathcal S'$ and $\mathcal S''$):
(i) $|\mathcal S_\cap| \geq 2$ and $\mathcal S' \cup \mathcal S'' \setminus \mathcal S_\cap \neq \emptyset$, 
(ii) the elements of~$\mathcal S_\cap$ appear consecutively both in $\mathcal S'$ and $\mathcal S''$, and 
(iii) the cyclic ordering of the elements in~$\mathcal S_\cap$ determined by~$\mathcal S'$ is the reverse of the cyclic order of these elements determined by~$\mathcal S''$.
We define the binary operator $\merge$, called \emph{bubble merge}, that returns an admissible partition $P^* = P' \merge P''$ obtained as follows. Consider the cycle-stars $G(P')$ and $G(P'')$ associated with $P'$ and $P''$, respectively. First, we identify the vertices corresponding to the same element of $\mathcal S_\cap$ in both $C(P')$ and $C(P'')$ (see~\cref{fi:projection_a}) to obtain a new plane graph $H$ (see \cref{fi:projection_b}). Observe that $H$ is $2$-connected since $G(P')$ and $G(P'')$ are $2$-connected and since $|\mathcal S_\cap| \geq 2$; therefore, the outer face~$f_H$ of~$H$ is a simple cycle. Second, we traverse $f_H$ clockwise to construct a cyclically-ordered set $\mathcal S^* \subseteq \mathcal S' \cup \mathcal S''$ on the vertices of $f_H$. Finally, we set $P^* = (P' \uplus P'')|_{\mathcal S^*}$. $P^*$ is good by the definition of generalized union. The fact that $P^*$ is a non-crossing partition of $\mathcal S^*$ follows immediately by the planarity of $H$ (see~\cref{fi:projection_c}). 
\cref{le:linear-union} and the fact that the graph $H$ can be easily constructed from $P'$ and $P''$ in linear time imply the following.

\begin{lemma}\label{le:bubble-merge}
$P^* = P' \merge P''$ can be computed in $O(|\mathcal S'|+|\mathcal S''|)$ time.
\end{lemma}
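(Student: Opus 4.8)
The plan is to follow the constructive definition of $\merge$ step by step and to charge $O(|\mathcal S'|+|\mathcal S''|)$ time to each of its four phases: building the two cycle-stars, gluing them into $H$, reading off $\mathcal S^*$ from the outer face of $H$, and forming $(P'\uplus P'')|_{\mathcal S^*}$.

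First I would construct $G(P')$ and $G(P'')$ together with explicit combinatorial embeddings (rotation systems). A cycle-star $G(P)$ on a cyclically-ordered set $\mathcal S$ has at most $|\mathcal S|$ cycle vertices, at most $|\mathcal S|$ star centers, and $O(|\mathcal S|)$ edges; the rotation at each cycle vertex is read off from the cyclic order on $\mathcal S$, and the rotation at a star center $v_i$ from listing the elements of $S_i$ in the order in which they appear in $\mathcal S$. Hence $G(P')$ and $G(P'')$, together with the designations of their outer cycles $C(P')$ and $C(P'')$, are obtained in $O(|\mathcal S'|)$ and $O(|\mathcal S''|)$ time, respectively. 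Next, by property~(ii) the elements of $\mathcal S_\cap$ form a consecutive block $s_1,\dots,s_k$ in $\mathcal S'$ and, by property~(iii), the reversed block $s_k,\dots,s_1$ in $\mathcal S''$; I would therefore scan the two blocks in opposite directions, identify the vertices pairwise, and update the rotation system at each of the $k$ glued vertices in $O(1)$ time. This produces the $2$-connected plane graph $H$ of the statement, in which the interior vertices $s_2,\dots,s_{k-1}$ of the block lie inside $H$ while its outer face $f_H$ is the concatenation of the two arcs of $C(P')$ and $C(P'')$ complementary to that block; a single boundary walk around $f_H$ then yields the cyclically-ordered set $\mathcal S^*\subseteq\mathcal S'\cup\mathcal S''$ in $O(|V(H)|+|E(H)|)=O(|\mathcal S'|+|\mathcal S''|)$ time.

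It remains to compute $P^*=(P'\uplus P'')|_{\mathcal S^*}$. By \cref{le:linear-union}, the good partition $P'\uplus P''$ of $\mathcal S'\cup\mathcal S''$ is computed in $O(|\mathcal S'|+|\mathcal S''|)$ time. After recording membership in $\mathcal S^*$ in a Boolean table indexed by $\mathcal S'\cup\mathcal S''$, so that membership queries take constant time, the projection is obtained by scanning every part of $P'\uplus P''$ once, intersecting it with $\mathcal S^*$, and discarding the parts that become empty; this costs time linear in $\sum_i|Q_i|\le|\mathcal S'|+|\mathcal S''|$. Summing the four contributions gives the claimed bound.

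The only step that needs more than bookkeeping is the gluing: one has to maintain an explicit embedding of $H$ so that $f_H$, and hence $\mathcal S^*$, can be recovered by one boundary walk, and one has to observe that it is precisely hypotheses~(ii)--(iii) that make the pairwise identification of the $\mathcal S_\cap$-vertices realizable by a plane graph rather than merely an abstract one. Since the text already states that $H$ can be built from $P'$ and $P''$ in linear time, this part of the write-up can be kept brief, and the lemma reduces to assembling the four linear-time phases above.
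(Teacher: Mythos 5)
Your proposal is correct and follows the same route the paper takes: the paper proves this lemma by observing that $H$ can be built from $P'$ and $P''$ in linear time and then invoking \cref{le:linear-union} for the generalized union and projection. You have simply spelled out the bookkeeping (rotation systems, the boundary walk of $f_H$, and the membership table for the projection) that the paper leaves implicit.
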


\subparagraph{Algorithm.}
Let $\cgraph{}{}$ be a  $2$-connected flat c-graph. Let $(D, \gamma)$ be a bond-carving decomposition of~$\dual(G)$ of width at most $\omega$ and let $\nu$ be a non-root bag of $D$. We denote by~$F_\nu$ the set of faces of $G$ that are dual to the vertices of $\dual(G)$ that are leaves of the subtree~$D_\nu$ of~$D$ rooted at $\nu$. Also, let $G_\nu$ be the embedded subgraph of $G$ induced by the edges of the faces in $F_\nu$. 
The \emph{interface graph}~$I_\nu$ of~$\nu$ is the subgraph of $G_\nu$ induced by the edges that are incident to a face of $G_\nu$ not in~$F_\nu$. The \emph{boundary} $B_\nu$ of $\nu$ is the vertex set of $I_\nu$. Note that, the edges of $I_\nu$ are dual to those in $\gamma(\nu)$. By~\cref{le:duality} and by the definition of bond-carving decomposition, we derive the next \mbox{observation about $I_\nu$.}

\begin{observation}\label{obs:boundary-cycle}
\mbox{The interface graph $I_\nu$ of $\nu$ is a cycle of length at most $\omega$.}
\end{observation}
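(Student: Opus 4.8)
The plan is to establish \cref{obs:boundary-cycle} as a direct consequence of \cref{le:duality} together with the fact that $(D,\gamma)$ is a \emph{bond}-carving decomposition, so that every cut-set $\gamma(\nu)$ is a bond of $\dual(G)$. First I would recall that, by the construction of the interface graph, the edges of $I_\nu$ are exactly the primal edges dual to the edges of $\gamma(\nu)$: an edge $e$ of $G_\nu$ lies in $I_\nu$ if and only if it is incident to a face in $F_\nu$ and to a face not in $F_\nu$, which is precisely the condition that the dual edge of $e$ has one endpoint among the leaves of $D_\nu$ and the other endpoint outside, i.e.\ that it belongs to $\gamma(\nu) = (\mathcal{L}_\nu, V(\dual(G)) \setminus \mathcal{L}_\nu)$.

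Next I would invoke the bond property: since $(D,\gamma)$ is a bond-carving decomposition, $\gamma(\nu)$ is a bond of $\dual(G)$, meaning both sides of the cut induce connected, non-null subgraphs of $\dual(G)$. By \cref{le:duality}, a set of edges of an embedded graph forms a bond in its dual if and only if the corresponding primal edges form a cycle. Applying this to $H = \dual(G)$ (whose dual is $G$, up to the standard identification for connected embedded graphs), the fact that $\gamma(\nu)$ is a bond in $\dual(G)$ implies that the dual edges of $\gamma(\nu)$ — which are exactly the edges of $I_\nu$ — form a cycle in $G$. Hence $I_\nu$ is a cycle. Finally, the length bound follows immediately: the number of edges of $I_\nu$ equals $|\gamma(\nu)|$, which is at most $\omega$ by the assumption that $(D,\gamma)$ has width at most $\omega$.

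I do not expect a genuine obstacle here, since all the heavy lifting is done by \cref{le:duality} and by the definitions already in place; the statement is essentially an unpacking of definitions. The one point that requires mild care is the application of \cref{le:duality} in the ``reverse'' direction — using it for the embedded graph $\dual(G)$ rather than for $G$ — which is legitimate because $G$ is connected (as assumed throughout the paper) and the double dual of a connected embedded graph is the graph itself. I would also note in passing that $G_\nu$ being embedded as a subgraph of the plane graph $G$ is what lets us speak of ``faces of $G_\nu$ not in $F_\nu$'' coherently, so that $I_\nu$ is well defined; this is the only place the planar embedding is used, and it poses no difficulty.
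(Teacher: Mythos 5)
Your argument is correct and matches the paper's own (one-sentence) derivation: the paper likewise notes that the edges of $I_\nu$ are dual to those in $\gamma(\nu)$ and obtains the observation directly from \cref{le:duality} and the bond property of a bond-carving decomposition, with the length bound coming from $|\gamma(\nu)|\leq\omega$. Your worry about applying \cref{le:duality} ``in reverse'' is unnecessary --- the lemma as stated already yields the needed direction (dual edges form a bond in $\dual(G)$ implies the primal edges form a cycle in $G$) --- but this does not affect correctness.
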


Since $G$ is $2$-connected, by \cref{obs:boundary-cycle}, the vertices in $B_\nu$ have a natural (clockwise) circular ordering defined by cycle $I_\nu$, and $I_\nu$ bounds the {\em unique} face $f^\infty_\nu$ of $G_\nu$ not in $F_\nu$. Therefore, from now on, we regard $B_\nu$ as a cyclically-ordered set.

Let $P \in \NC(B_\nu)$  be an admissible partition  and let $\cgraph{\nu}{}$ be the flat c-graph obtained by restricting $\mathcal C$ to $G_\nu$. Also, let $\cgraph{\nu}{\diamond}$ be a super c-graph of $\mathcal C_\nu$ containing no saturating edges in the interior of $f^\infty_\nu$ and such that $G^\diamond_\nu$ is planar.

\begin{figure}[t]
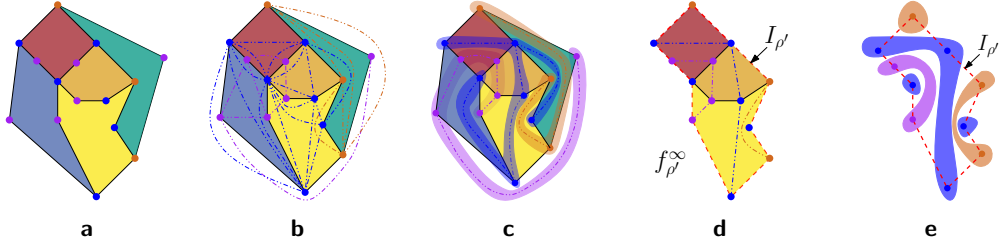

\centering
\begin{subfigure}{.19\textwidth}
    \centering
  \includegraphics[page=10,height=\columnwidth]{figures/graph2}
  \subcaption{}
  \label{fi:realizable_a}
\end{subfigure}
\begin{subfigure}{.19\textwidth}
    \centering
  \includegraphics[page=11,height=\columnwidth]{figures/graph2}
  \subcaption{}
  \label{fi:realizable_b}
\end{subfigure}
\begin{subfigure}{.19\textwidth}
    \centering
  \includegraphics[page=12,height=\columnwidth]{figures/graph2}
  \subcaption{}
  \label{fi:realizable_c}
\end{subfigure}
\begin{subfigure}{.19\textwidth}
    \centering
  \includegraphics[page=6,height=\columnwidth]{figures/graph2}
  \subcaption{}
  \label{fi:realizable_d}
\end{subfigure}
\begin{subfigure}{.19\textwidth}
  \centering
  \includegraphics[page=7,height=\columnwidth]{figures/graph2}
  \subcaption{}
  \label{fi:realizable_e}
\end{subfigure}
\caption{(a) A flat c-graph $\calC{}(G,\calT{})$. (b) A super c-graph of $\calC{}$ containing all the candidate saturating edges. (c) A c-planar drawing of $\calC{}$ and the corresponding planar saturation. (d) A planar saturation of a c-graph, whose underlying graph is the graph $G_{\rho'}$ of the decomposition in \cref{fi:decomposition}, where no saturating edge lies in the interior of $f^\infty_{\rho'}$. (e) The admissible partition $P$ determined by the planar saturation in (d); sets of vertices of $I_{\rho'}$ belonging to the same cluster and connected by saturating edges in (d) form distinct parts in $P$ (enclosed by shaded regions).}\label{fi:realizable}
\end{figure}

\begin{definition} \label{def:realizes}
The c-graph $\mathcal C^\diamond_\nu$  \emph{realizes} $P$ if (refer to~\cref{fi:realizable}):
\begin{enumerate}[(a)]
\item \label{item:a} for every two vertices $u, v \in B_\nu$, we have that $u$ and $v$ belong to the same part $S_i \in P$ {\em if and only if} they are connected in $G^\diamond_\nu$ by paths of intra-cluster edges of the cluster the part $S_i$ belongs to,
\item \label{item:b} for each cluster $\mu$ in $\mathcal T$ such that $V_\mu \cap B_\nu \neq \emptyset$, all the vertices of $\mu$ in $G_\nu$ are connected to some vertex of $\mu$ in $B_\nu$ by paths of intra-cluster edges of $\mu$ in $G^\diamond_\nu$, and
\item \label{item:c} for each cluster $\mu$ in $\mathcal T$ such that $V_\mu \subseteq V(G_\nu) \setminus B_\nu$, all the vertices of $\mu$ are in $G_\nu$ and are connected by paths of intra-cluster edges of $\mu$ in $G^\diamond_\nu$.
\end{enumerate}
\end{definition}

A vertex $v$ of $G_\nu$ is \emph{dominated by} some part $S_i$ of $P$, if either $v \in S_i$ or $v$ is connected to some vertex of $S_i$ by paths of intra-cluster edges in $\mathcal C^\diamond_\nu$.
Note that, by \cref{item:a,item:b} of \cref{def:realizes}, for each part $S_i$ of $P$, the set of vertices of $G_\nu$ dominated by~$S_i$ induces a connected subgraph of $G^\diamond_\nu$. Also, by \cref{item:c},
cycle $I_\nu$ does not form a \emph{cluster separator}, that is, a cycle of $G$ such that the vertices of some cluster $\mu$ appear both in its interior and in its exterior, but not in it;
note that, in fact, this is a necessary condition for the existence of a c-planar drawing of $\mathcal C$. Thus, partition $P$ {\em``represents''} the internal-cluster connectivity in $\mathcal C^\diamond_\nu$ of the clusters whose vertices appear in $B_\nu$ {\em in a potentially positive instance}. Also, $P$ is \emph{realizable} by $\mathcal C_\nu$ if there exists a super c-graph $\cgraph{\nu}{\prime}$ of $\mathcal C_\nu$ that realizes~$P$ containing no saturating edges in the interior of $f^\infty_\nu$ and such that~$G^\prime_\nu$ is planar. 
From \cref{th:characterization} and the definition of realizable partition we have the following.

\begin{lemma}[Necessity]\label{le:necessity}
Let $\nu$ be a non-root bag of $D$. Then, $\mathcal C$ is c-planar {only if} there exists an admissible partition $P$ of $B_\nu$ that is realizable by $\mathcal C_\nu$.
\end{lemma}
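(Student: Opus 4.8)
The plan is to start from a c-planar drawing of $\mathcal C$ and extract from it both a suitable planar saturation and the desired admissible partition of $B_\nu$. Assume $\mathcal C$ is c-planar. By \cref{th:characterization}, there exists a super c-graph $\cgraph{}{*}$ of $\mathcal C$ such that $G^*$ is planar and $\mathcal C^*$ is c-connected; fix such a super c-graph together with a c-planar drawing $\Gamma$ of it that respects the given embedding of $G$. Since $\nu$ is a non-root bag of $D$, the interface graph $I_\nu$ is, by \cref{obs:boundary-cycle}, a cycle of $G$ bounding the unique face $f^\infty_\nu$ of $G_\nu$ not in $F_\nu$. The cycle $I_\nu$ separates the plane into two regions in $\Gamma$; let $R_\nu$ be the closed region containing the drawing of $G_\nu$ (i.e., the region on the side of the faces in $F_\nu$).

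The first key step is to build the candidate super c-graph $\cgraph{\nu}{\prime}$ of $\mathcal C_\nu$. I would take $G'_\nu$ to be the subgraph of $G^*$ consisting of $G_\nu$ together with all saturating edges of $E^*$ that are drawn inside $R_\nu$ in $\Gamma$. By construction no saturating edge of $G'_\nu$ lies in the interior of $f^\infty_\nu$, and $G'_\nu$ is planar since it is a subgraph of the plane graph $G^*$; moreover it inherits a planar embedding from $\Gamma$ in which $f^\infty_\nu$ is a face. The second key step is to define $P$: for each cluster $\mu$ with $V_\mu \cap B_\nu \neq \emptyset$, partition $V_\mu \cap B_\nu$ according to the connected components of the subgraph of $G'_\nu$ induced by the intra-cluster edges of $\mu$; let $P$ be the union of these partitions over all such clusters. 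Then $\cgraph{\nu}{\prime}$ realizes $P$: condition (\ref{item:a}) holds by the very definition of $P$; condition (\ref{item:b}) holds because $\mathcal C^*$ is c-connected, so $G^*[V_\mu]$ is connected, and every vertex of $\mu$ in $G_\nu$ reaches, via intra-cluster edges that cannot leave $R_\nu$ without crossing $I_\nu$ twice (contradicting c-planarity since edges cross a cluster boundary at most once and $I_\nu$ need not be a cluster boundary — here one argues instead by the hole-freeness/cluster-separator structure), a vertex of $\mu$ on $B_\nu$; condition (\ref{item:c}) holds similarly, observing that a cluster entirely inside $R_\nu \setminus I_\nu$ is connected in $G^*$ and, being disjoint from $B_\nu$, its spanning tree of intra-cluster edges stays in $G'_\nu$.

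The remaining step is to check that $P$ is \emph{admissible}, i.e., good and non-crossing. Goodness is immediate: every part is contained in a single cluster by construction. For the non-crossing property, suppose two parts $S_i \subseteq V_\mu$ and $S_j \subseteq V_\eta$ interleave along the cyclic order on $B_\nu$ given by $I_\nu$. Each $S_i$ is connected inside $R_\nu$ by a tree $T_i$ of intra-cluster edges of $\mu$, and likewise $S_j$ by a tree $T_j$ of intra-cluster edges of $\eta$. If $\mu = \eta$ then $S_i$ and $S_j$ would in fact be joined into one part (their connecting trees lie in the same induced subgraph), a contradiction; if $\mu \neq \eta$, the trees $T_i$ and $T_j$ are vertex-disjoint, yet two vertex-disjoint connected subsets of the interior of a disk whose anchors on the boundary cycle interleave must cross, contradicting the planarity of $G'_\nu$. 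Hence $P$ is non-crossing, $P$ is admissible, and $\cgraph{\nu}{\prime}$ witnesses that $P$ is realizable by $\mathcal C_\nu$.

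I expect the main obstacle to be the careful argument for conditions (\ref{item:b}) and (\ref{item:c}) of \cref{def:realizes}: one must rule out the possibility that the connectivity of a cluster $\mu$ in $G^*$ is realized by paths that weave in and out of $R_\nu$, and this is exactly where the hole-freeness of $\mathcal C$ (equivalently, that $I_\nu$ is not a cluster separator, as noted after \cref{def:realizes}) and the fact that in a c-planar drawing each cluster region is a topological disk must be invoked to confine a suitable intra-cluster spanning structure of $\mu \cap V(G_\nu)$ to $R_\nu$. The interleaving/crossing argument for non-crossingness is routine by planarity once this confinement is established.
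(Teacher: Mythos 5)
The paper gives no proof of this lemma at all --- it is stated as following directly ``from \cref{th:characterization} and the definition of realizable partition'' --- so your proposal is a fleshed-out version of exactly the intended argument: take the c-connected planar super c-graph $\mathcal C^*$ guaranteed by \cref{th:characterization}, keep only the saturating edges drawn on the $G_\nu$-side of $I_\nu$, and read $P$ off the intra-cluster connected components meeting $B_\nu$. The construction, the verification of \cref{item:a}, and the interleaving argument for non-crossingness are sound.

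The one place your write-up wobbles is precisely the point you flag as ``the main obstacle,'' and the fix is simpler than the tools you reach for: you need neither hole-freeness nor the fact that cluster regions are disks. You only need that $G^*$ is planar and that $I_\nu$ is a cycle of $G^*$ (by \cref{obs:boundary-cycle}); hence no edge of $G^*$ crosses an edge of $I_\nu$, every edge of $G^*$ lies entirely in $R_\nu$ or entirely in its closed complement, and any path of $G^*$ from a vertex strictly inside $R_\nu$ to a vertex outside must contain a vertex of $B_\nu$. For \cref{item:b}, take any path in the connected graph $G^*[V_\mu]$ from $v \in V_\mu \cap V(G_\nu)$ to a vertex of $V_\mu \cap B_\nu$ and truncate it at the \emph{first} vertex of $B_\nu$ it meets; that vertex belongs to $V_\mu$ (the whole path does), and the prefix stays in $R_\nu$, hence in $G'_\nu$ --- there is no ``weaving in and out'' to rule out. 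For \cref{item:c}, a cluster with $V_\mu \cap B_\nu = \emptyset$ and a vertex in $V(G_\nu)\setminus B_\nu$ can never have a path of $G^*[V_\mu]$ reaching $B_\nu$, so $V_\mu \subseteq V(G_\nu)\setminus B_\nu$ and every connecting path stays in $R_\nu$; this simultaneously shows that $I_\nu$ is not a cluster separator. A minor additional slip: in the non-crossing argument, two interleaving parts of the \emph{same} cluster are not ``joined into one part'' --- they are by construction distinct connected components --- but their spanning trees are still vertex-disjoint, so your planarity contradiction applies verbatim to that case as well.
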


We are going to exploit the next lemma, which holds for any bond-carving decomposition.

\begin{lemma}\label{le:interface-graphs}
Let $\rho'$ and $\rho''$ be the two children of the root $\rho$ of $D$.
Then, $I_{\rho'} = I_{\rho''}$.
\end{lemma}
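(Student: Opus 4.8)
The plan is to exploit the fact that $\rho$ is the \emph{root} of $D$: the two subtrees $D_{\rho'}$ and $D_{\rho''}$ partition the leaf set of $D$, so the cut-sets $\gamma(\rho')$ and $\gamma(\rho'')$ coincide as edge sets of $\dual(G)$, hence the primal edge sets of $I_{\rho'}$ and $I_{\rho''}$ coincide, and since each interface graph is the subgraph \emph{induced by} those edges (thus without isolated vertices), equal edge sets force equal vertex sets, so the two interface graphs are literally the same graph.

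First I would record that $\mathcal L_{\rho'} \cup \mathcal L_{\rho''}$ is the entire leaf set of $D$, i.e.\ the whole vertex set $V(\dual(G))$, and that $\mathcal L_{\rho'} \cap \mathcal L_{\rho''} = \emptyset$; this is immediate from $D$ being a rooted binary tree with $\rho',\rho''$ the children of its root. Hence $V(\dual(G)) \setminus \mathcal L_{\rho'} = \mathcal L_{\rho''}$ and $V(\dual(G)) \setminus \mathcal L_{\rho''} = \mathcal L_{\rho'}$, so by the definition of $\gamma$ we have $\gamma(\rho') = (\mathcal L_{\rho'}, \mathcal L_{\rho''})$ and $\gamma(\rho'') = (\mathcal L_{\rho''}, \mathcal L_{\rho'})$. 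Since a cut-set $(S, V\setminus S)$ is by definition the set of edges with exactly one endpoint in $S$, the sets $(\mathcal L_{\rho'}, \mathcal L_{\rho''})$ and $(\mathcal L_{\rho''}, \mathcal L_{\rho'})$ are the same set of edges of $\dual(G)$; that is, $\gamma(\rho') = \gamma(\rho'')$.

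Next I would invoke the fact recorded just before \cref{obs:boundary-cycle}, namely that the edge set of $I_\nu$ consists exactly of the primal edges dual to the edges of $\gamma(\nu)$. Applying this to $\nu=\rho'$ and $\nu=\rho''$ and using $\gamma(\rho')=\gamma(\rho'')$ gives $E(I_{\rho'}) = E(I_{\rho''})$. Finally, since $I_\nu$ is defined as the subgraph of $G_\nu$ induced by these edges — so that $B_\nu = V(I_\nu)$ is precisely the set of endpoints of the edges of $I_\nu$ — equality of edge sets yields $B_{\rho'} = V(I_{\rho'}) = V(I_{\rho''}) = B_{\rho''}$, and therefore $I_{\rho'} = I_{\rho''}$ (a single cycle, by \cref{obs:boundary-cycle}). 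There is no genuinely hard step here: the only point that requires care is the claim $V(\dual(G)) \setminus \mathcal L_{\rho'} = \mathcal L_{\rho''}$, which fails for an internal bag of $D$ and is exactly what makes the statement a property of the \emph{root}'s two children; note also that the argument uses only that $D$ is a carving decomposition, which is why the lemma holds for any bond-carving decomposition.
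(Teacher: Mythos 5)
Your proof is correct and follows essentially the same route as the paper's: observe that $\mathcal L_{\rho'}$ and $\mathcal L_{\rho''}$ are complementary, conclude $\gamma(\rho')=\gamma(\rho'')$, and hence that $I_{\rho'}$ and $I_{\rho''}$ consist of the same primal edges. The only difference is that you spell out the final step (equal edge sets of edge-induced subgraphs force equal graphs), which the paper leaves implicit.
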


\begin{proof}
Let $\mathcal L_{\rho'}$ and $\mathcal L_{\rho''}$ be the set of leaves of the subtree of $D$ rooted at $\rho'$ and at $\rho''$, respectively. 
Recall that, $\gamma(\rho') = (\mathcal L_{\rho'}, V \setminus \mathcal L_{\rho'})$ and that $\gamma(\rho'') = (\mathcal L_{\rho''}, V \setminus \mathcal L_{\rho''})$. Since 
$\mathcal L_{\rho'} = V \setminus \mathcal L_{\rho''}$ and $\mathcal L_{\rho''} = V \setminus \mathcal L_{\rho'}$, we have $\gamma(\rho')=\gamma(\rho'')$. 
Thus, cycles $I_{\rho'}$ and $I_{\rho''}$ consist of the edges of $G$ that are dual to the same edges of $\dual(G)$. This concludes the proof.
\end{proof}

\cref{le:necessity,le:interface-graphs} allow us to derive the following useful characterization.

\begin{theorem}[Characterization]\label{th:sufficiency}
The $2$-connected flat c-graph $\mathcal C(G, \mathcal T)$ is c-planar {if and only if} there exist admissible partitions $P' \in \NC(B_{\rho'})$, $P'' \in \NC(B_{\rho''})$ such that: 
\begin{enumerate}[(i)]
\item \label{con:1-suff} $P'$ and $P''$ are realizable by \cgraph{\rho'}{} and by \cgraph{\rho''}{}, respectively, and
\item \label{con:3-suff} no two distinct parts $S_i,S_j \in P^*$, with $P^* = P' \uplus P''$, belong to the same cluster of $\mathcal T$.
\end{enumerate}
\end{theorem}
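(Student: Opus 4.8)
The plan is to prove the two directions of \cref{th:sufficiency} separately, using \cref{th:characterization} as the bridge between c-planarity and the existence of a planar, c-connected super c-graph.

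\textbf{Necessity.} Suppose $\mathcal C$ is c-planar. By \cref{th:characterization}, there is a super c-graph $\cgraph{}{*}$ of $\mathcal C$ with $G^*$ planar and $\mathcal C^*$ c-connected; moreover, since $\mathcal C$ is embedded we may assume $G^*$ respects the given embedding, so the planar saturation $E^+ = E(G^*) \setminus E(G)$ consists of candidate saturating edges drawn inside faces of $G$. For each child $\rho'$, $\rho''$ of the root, restrict $\mathcal C^*$ to $G_{\rho'}$ (resp.\ $G_{\rho''}$) to obtain a super c-graph $\cgraph{\rho'}{*}$ of $\mathcal C_{\rho'}$: concretely, keep exactly those saturating edges of $E^+$ that are drawn inside faces of $G_{\rho'}$, i.e., inside faces that lie in $F_{\rho'}$. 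By \cref{le:interface-graphs}, $I_{\rho'} = I_{\rho''}$ is the common interface cycle $I$, and the face $f^\infty_{\rho'}$ of $G_{\rho'}$ bounded by $I$ is precisely the region containing all of $G_{\rho''}$ minus $I$; hence no saturating edge of $\cgraph{\rho'}{*}$ lies in the interior of $f^\infty_{\rho'}$, as required. Now define $P'$ on $B_{\rho'} = V(I)$ by putting $u,v$ in the same part iff they are joined in $G^{*}_{\rho'}$ by a path of intra-cluster edges of a common cluster; since $\mathcal C^*$ is c-connected and $I$ is not a cluster separator (a necessary condition already noted after \cref{def:realizes}), this $P'$ is well defined, good, and realized by $\cgraph{\rho'}{*}$, so $P'$ is realizable by $\mathcal C_{\rho'}$; it is non-crossing because the paths witnessing connectivity are disjoint plane subgraphs of $G^*$ restricted to $G_{\rho'}$, which lie inside the disk bounded by $I$ and thus cannot interleave along $I$. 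Define $P''$ symmetrically. For \cref{con:3-suff}, let $P^* = P' \uplus P''$; a part of $P^*$ is a maximal union of parts of $P'$ and $P''$ chained through shared boundary vertices, hence exactly corresponds to an intra-cluster-connected component of $G^*$ intersected with $I$. Two parts of $P^*$ in the same cluster $\mu$ would give two distinct intra-cluster-connected components of $G^*$ both meeting $V_\mu$, contradicting c-connectivity of $\mathcal C^*$.

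\textbf{Sufficiency.} Conversely, assume $P' \in \NC(B_{\rho'})$ and $P'' \in \NC(B_{\rho''})$ satisfy \cref{con:1-suff} and \cref{con:3-suff}. By realizability, pick super c-graphs $\cgraph{\rho'}{\prime}$ and $\cgraph{\rho''}{\prime}$ of $\mathcal C_{\rho'}$, $\mathcal C_{\rho''}$ with planar underlying graphs, no saturating edges inside $f^\infty_{\rho'}$ resp.\ $f^\infty_{\rho''}$, that realize $P'$ resp.\ $P''$. Glue these along the common interface cycle $I$ (using \cref{le:interface-graphs}) to form a super c-graph $\cgraph{}{\prime}$ of $\mathcal C$: this is possible and planarity-preserving because the two drawings occupy the two sides of the curve $I$ in the embedding of $G$ and neither uses the interior of its $f^\infty$. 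Then further saturate: for each cluster $\mu$ and each pair of parts of $P^* = P' \uplus P''$ belonging to $\mu$ we would need to connect them, but \cref{con:3-suff} says there are no such pairs, so $P^*$ has at most one part per cluster meeting $I$. Using \cref{item:b},\cref{item:c} of \cref{def:realizes} on each side, every vertex of every cluster $\mu$ in $G_{\rho'}$ (resp.\ $G_{\rho''}$) is intra-cluster-connected to a vertex of $\mu$ on $I$, or $\mu$ is entirely internal and already connected; combining the two sides through the unique $P^*$-part of $\mu$ on $I$ shows every cluster of $\mathcal T$ is connected in $\cgraph{}{\prime}$. Hence $\cgraph{}{\prime}$ is a planar c-connected super c-graph of $\mathcal C$, and $\mathcal C$ is c-planar by \cref{th:characterization}.

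\textbf{Main obstacle.} The delicate point is the gluing step in the sufficiency direction: one must argue that a super c-graph realizing $P'$ on one side and one realizing $P''$ on the other can be drawn simultaneously in the embedding of $G$, and that their union is still a valid super c-graph (i.e., that no saturating edge crosses $I$ and that the combined drawing is planar). This is where the hypotheses that neither side places saturating edges inside its outer face $f^\infty$, together with the fact (\cref{le:interface-graphs}) that $I_{\rho'}=I_{\rho''}$ so the two outer faces are ``the same curve'' traversed in opposite directions, are essential; the non-crossing property of $P'$ and $P''$, and correspondingly of $P^*$ as produced by \merge\ in \cref{fi:projection_c}, is what certifies that the intra-cluster connectivity patterns on the two sides are compatible and can be realized by plane subgraphs without forcing a crossing across $I$. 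I would make this precise by viewing $P^* = P' \uplus P''$ through the lens of the bubble-merge operator: planarity of the auxiliary graph $H$ used to define \merge\ is exactly the statement that the glued drawing is planar.
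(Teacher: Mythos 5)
Your proof is correct and follows essentially the same route as the paper's: both directions hinge on \cref{th:characterization}, \cref{le:interface-graphs}, and the items of \cref{def:realizes}, with the sufficiency argument splitting into the same two cases according to whether a cluster meets the common boundary $B$. The only cosmetic differences are that you prove necessity of \cref{con:3-suff} directly (identifying parts of $P^*$ with intra-cluster-connected components of the saturation restricted to $I$) where the paper argues by contradiction, and you unfold the content of \cref{le:necessity} instead of citing it.
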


\begin{proof}
We first prove the {\em only if} part. The necessity of \cref{con:1-suff} follows from~\cref{le:necessity}. 
For the necessity of \cref{con:3-suff}, suppose for a contradiction, that for any two realizable partitions $P' \in \NC(B_{\rho'})$ and $P'' \in \NC(B_{\rho''})$, it holds that $P^* = P' \uplus P''$ does not satisfy such a condition. Then, there is no set of saturating edges of $\mathcal C_{\rho'}$ and $\mathcal C_{\rho''}$, where none of these edges lies in the interior of $f^\infty_{\rho'}$ and of $f^\infty_{\rho''}$, respectively, that when added to $\mathcal C$ yields a c-connected c-graph $\cgraph{}{\diamond}$ with $G^\diamond$ planar. \mbox{Thus, by~\cref{th:characterization}, $\mathcal C$ is not c-planar, a contradiction.}

We now prove the {\em if part}. 
By~\cref{le:interface-graphs}, it holds $G = G_{\rho'} \cup G_{\rho''}$ and $I_{\rho'}=I_{\rho''} = G_{\rho'} \cap G_{\rho''}$.
Let $\mathcal C^\diamond_{\rho'}$ be a super c-graph of $\mathcal C_{\rho'}$ realizing $P'$ and let 
$\mathcal C^\diamond_{\rho''}$ be a super~c-graph~of~$\mathcal C_{\rho''}$ realizing $P''$; these c-graphs exist since \cref{con:1-suff} holds.
Let $\mathcal C^\diamond$ be the super c-graph of $\mathcal C$ obtained by augmenting $\mathcal C$ with the saturating edges of both $\mathcal C^\diamond_{\rho'}$ and $\mathcal C^\diamond_{\rho''}$.
Note that, $G^\diamond$ is planar.

We show that every cluster $\mu$ is connected in $\mathcal C^\diamond$, provided that \cref{con:3-suff} holds. This proves that $\mathcal C^\diamond$ is a c-connected super c-graph of $\mathcal C$, thus by \cref{con:3} of \cref{th:characterization}, c-graph $\cal C$ is c-planar. We distinguish two cases, based on whether some vertices of $\mu$ appear along cycle $I_{\rho'}=I_{\rho''}$ or not. Let $B = B_{\rho'} = B_{\rho''}$.

Consider first a cluster $\mu$ containing vertices in $B$. By \cref{item:b} of \cref{def:realizes}, we have that every vertex in $\mu$ is dominated by at least a part of $P'$ or of $P''$, i.e., they either belong to $B$ or they are connected by paths of intra-cluster edges in either~$\mathcal C^\diamond_{\rho'}$ or~$\mathcal C^\diamond_{\rho''}$ to a vertex in~$B$. Since, by \cref{con:3-suff} of the statement, there exists only one part $S_\mu \in P^*$ that contains vertices of cluster~$\mu$, we have that the different parts of $P'$ and of $P''$ containing vertices of~$\mu$ are joined together by the vertices of~$\mu$ in $B$. Therefore, the cluster $\mu$ is connected~in~$\mathcal C^\diamond$.
Finally, consider a cluster $\mu$ such that no vertex of $\mu$ belongs to $B$. Then, all the vertices of cluster $\mu$ only belong to either~$\mathcal C_{\rho'}$ or~$\mathcal C_{\rho''}$, by \cref{item:c} of \cref{def:realizes}. Suppose that  $\mu$ only belongs to $\mathcal C_{\rho'}$, the case when $\mu$ only belongs to $\mathcal C_{\rho''}$ is analogous. Since $\mathcal C^\diamond_{\rho'}$ realizes~$P'$, by \cref{item:c} of \cref{def:realizes}, all the vertices of $\mu$ in $G_\rho'$ are connected \mbox{by paths of intra-cluster edges.} Thus, cluster $\mu$ is connected in $\mathcal C^\diamond$, since it is connected in $\mathcal C^\diamond_{\rho'}$. This concludes the proof.
\end{proof}

We now present our main algorithmic tool.

\subparagraph{Algorithm~1.} Let $(D,\gamma)$ be a bond-carving decomposition of $\dual(G)$ of width $\omega$.
Let $\nu$ be a non-root bag of $D$, we denote by $R_\nu$ the set of all the admissible partitions of $B_\nu$ that are realizable by $\mathcal C_\nu$.
We process the bags of $D$ bottom-up and compute the following \emph{relevant information}, for each non-root bag $\nu$ of $D$:
{\bf 1.}~the set $R_\nu$, and
{\bf 2.}~for each admissible partition $P \in R_\nu$ and for each part $S_i \in P$, the number $count(S_i)$ of vertices of cluster $\mu$ belonging to $G_\nu$ that are dominated by $S_i$, where $\mu$ is the cluster~$S_i$ belongs to.

\begin{itemize}
\item If $\nu$ is a leaf bag of $D$, then  $G_\nu=I_\nu$ consists of the vertices and edges of a single face of $G$. Further, by \cref{obs:boundary-cycle}, graph $G_\nu$ is a cycle of \mbox{length at most $\omega$.} In this case, $R_\nu$ simply coincides with the set of all the admissible partitions of $B_\nu$.
Therefore, we can construct $R_\nu$ by enumerating all the possible at most $\CAT(\omega) \leq 2^{2\omega}$ non-crossing partitions of $B_\nu$ and by testing whether each such partition is good in $O(\omega)$ time. 
Further, for each $P \in R_\nu$, we can compute all counters $count(S_i)$ for every $S_i \in P$, in total $O(\omega)$ time, by visiting cycle $I_\nu$.

\item If $\nu$ is a non-leaf non-root bag of $D$, 
we have already computed the relevant information for the two children $\nu'$ and $\nu''$ of $\nu$. In the following way, we either detect that $\mathcal C$ does not satisfy \cref{con:3} of \cref{th:characterization} or construct the relevant information for $\nu$: 
\begin{enumerate}[(1)]
	\item \label{algo:1} Initialize $R_\nu = \emptyset$;
	\item \label{algo:2} For every pair of realizable admissible partitions $P' \in \mathcal R_{\nu'}$ and $P'' \in  R_{\nu''}$, perform the following operations:
	\begin{enumerate}[({2}a)]
		\item \label{algo:2.1} Compute $P^* = P' \uplus P''$  and compute the counters $count(S_i)$, for each $S_i \in P^*$, from the counters of the parts in $P' \cup P''$ whose union is $S_i$.
		\item \label{algo:2.2} If there exists some $S_i \in P^*$ such that $S_i \cap B_\nu = \emptyset$
		and $count(S_i)$ is smaller than the number of vertices in the cluster $S_i$ belongs to, then {\bf reject} the instance.
		\item \label{algo:2.3} Compute $P = P' \merge P''$ and add $P$ to $R_\nu$.
	\end{enumerate}
\end{enumerate}
\end{itemize}

\begin{remark}
Algorithm~1 rejects the instance at step (\ref{algo:2.2}), if $I_\mu$ forms a cluster separator. This property is independent of the specific generalized union $P^*$ considered at this step and implies that no $P^*$ (and, thus, no $P$ at step (\ref{algo:2.3})) can satisfy \cref{item:c} of \cref{def:realizes}.
\end{remark}

As the total number of pairs of partitions at step (\ref{algo:2}) is at most $(\CAT(\omega))^2$ and as $P^*$ and $P$ can be computed in $O(\omega)$ time, by~\cref{le:linear-union,le:bubble-merge}, we get the following.

\begin{lemma}\label{le:single-bag}
For each non-root bag $\nu$ of $D$, Algorithm~1 computes the relevant information for $\nu$ in~$O(2^{4 \omega + \log \omega})$ time, given the relevant information for its children.
\end{lemma}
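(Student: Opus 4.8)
The plan is to bound the total work performed by Algorithm~1 at a single non-leaf non-root bag $\nu$ by accounting separately for the enumeration of pairs of partitions and for the cost of processing each pair. First I would recall the setup: we are given the relevant information for the two children $\nu'$ and $\nu''$, namely the sets $R_{\nu'} \subseteq \NC(B_{\nu'})$ and $R_{\nu''} \subseteq \NC(B_{\nu''})$ together with the counters. By~\cref{obs:boundary-cycle}, the interface graphs $I_{\nu'}$ and $I_{\nu''}$ are cycles of length at most $\omega$, so $|B_{\nu'}|, |B_{\nu''}| \le \omega$; hence $|R_{\nu'}| \le |\NC(B_{\nu'})| = \CAT(|B_{\nu'}|) \le \CAT(\omega) \le 2^{2\omega}$ and likewise $|R_{\nu''}| \le 2^{2\omega}$, using the Catalan bound $\CAT(n) \le 2^{2n}$ recorded earlier.

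Next I would analyze step~(\ref{algo:2}). It iterates over all pairs $(P', P'') \in R_{\nu'} \times R_{\nu''}$, of which there are at most $(\CAT(\omega))^2 \le (2^{2\omega})^2 = 2^{4\omega}$. For each such pair, step~(\ref{algo:2.1}) computes $P^* = P' \uplus P''$ in $O(|B_{\nu'}| + |B_{\nu''}|) = O(\omega)$ time by~\cref{le:linear-union}, and the counters $count(S_i)$ for $S_i \in P^*$ are obtained by summing the counters of the (at most $|B_{\nu'}| + |B_{\nu''}| = O(\omega)$ many) parts of $P' \cup P''$ whose union forms $S_i$; since each part of $P' \cup P''$ contributes to exactly one part of $P^*$, this is $O(\omega)$ work overall. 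Step~(\ref{algo:2.2}) scans the parts of $P^*$ once, checking for each $S_i$ with $S_i \cap B_\nu = \emptyset$ whether $count(S_i)$ falls short of the size of the cluster $S_i$ belongs to; this is $O(\omega)$ time (the cluster sizes are available from the input). Step~(\ref{algo:2.3}) computes $P = P' \merge P''$ in $O(|\mathcal S'| + |\mathcal S''|) = O(|B_{\nu'}| + |B_{\nu''}|) = O(\omega)$ time by~\cref{le:bubble-merge}, and inserts it into $R_\nu$ in $O(\omega)$ time (e.g., using an encoding of the partition as a key of length $O(\omega)$). Thus each of the $O(2^{4\omega})$ pairs is handled in $O(\omega)$ time, for a total of $O(2^{4\omega} \cdot \omega) = O(2^{4\omega + \log \omega})$ time at $\nu$.

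Finally I would note that the output size is also within this bound: $R_\nu$ has at most $\CAT(\omega) \le 2^{2\omega}$ partitions (it is a subset of $\NC(B_\nu)$, and $|B_\nu| \le \omega$ by~\cref{obs:boundary-cycle}), each stored with $O(\omega)$ counters, so the relevant information for $\nu$ occupies $O(2^{2\omega} \cdot \omega)$ space, dominated by the time bound. Combining the per-pair cost with the count of pairs gives the claimed $O(2^{4\omega + \log \omega})$ running time for constructing the relevant information for $\nu$. The only mildly delicate point — and the one I would be most careful about — is verifying that the counter aggregation in step~(\ref{algo:2.1}) really is linear rather than quadratic: one must observe that the correspondence between parts of $P' \cup P''$ and parts of $P^* = P' \uplus P''$ is a function (each old part merges into exactly one new part), which is exactly what the generalized-union construction in~\cref{le:linear-union} (via connected components of $G^*$) provides, so a single pass over the BFS components suffices to distribute the counters; everything else is a routine summation of the $O(\omega)$-time subroutines over $O(2^{4\omega})$ iterations.
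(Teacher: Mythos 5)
Your running-time analysis is essentially the same as the paper's: at most $(\CAT(\omega))^2 \leq 2^{4\omega}$ pairs, each processed in $O(|B_{\nu'}|+|B_{\nu''}|)=O(\omega)$ time via \cref{le:linear-union} and \cref{le:bubble-merge}, giving $O(2^{4\omega+\log\omega})$. That part is fine, and your observation about the counter aggregation being a single pass over the components is a correct (if routine) refinement.

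However, there is a genuine gap: the lemma does not merely assert a time bound, it asserts that Algorithm~1 \emph{computes the relevant information}, i.e., that the set it outputs at $\nu$ is exactly the set $R_\nu$ of all admissible partitions of $B_\nu$ that are realizable by $\mathcal C_\nu$ (together with correct counters). Your proposal takes this for granted and proves only the complexity claim. The paper's proof spends most of its length on precisely this correctness statement, and it is not automatic. One must show both inclusions. For $R^*_\nu \subseteq R_\nu$: starting from a super c-graph of $\mathcal C_\nu$ realizing an arbitrary realizable partition $P_\nu$, one restricts its saturating edges to the faces of $G_{\nu'}$ and $G_{\nu''}$ to obtain partitions $P'\in R_{\nu'}$ and $P''\in R_{\nu''}$ with $P_\nu = P' \merge P''$, and one must argue that the pair $(P',P'')$ is not rejected at the counter-check step (otherwise some cluster confined to $G_\nu\setminus B_\nu$ would have vertices outside $G_\nu$, contradicting \cref{item:c} of \cref{def:realizes}). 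For $R_\nu \subseteq R^*_\nu$: given $P = P'\merge P''$ admitted by the algorithm, one must exhibit a super c-graph of $\mathcal C_\nu$ realizing $P$ with no saturating edges inside $f^\infty_\nu$ and verify all three conditions of \cref{def:realizes}; in particular, \cref{item:b,item:c} hold only \emph{because} the rejection test at step (2b) would otherwise have fired, which is where the counters earn their keep. Without this argument the recurrence underlying the dynamic program is unjustified, and the later use of $R_{\rho'}$, $R_{\rho''}$ in the characterization theorem would have nothing to stand on. (A minor additional point: your analysis covers only non-leaf bags, whereas the lemma is stated for every non-root bag; the leaf case needs the separate, easier bound of $O(\CAT(\omega)\cdot\omega)$ for enumerating and testing all non-crossing partitions of $B_\nu$.)
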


\begin{proof}
Let $\nu'$ and $\nu''$ be the two children of $\nu$ in $D$.
We will first show the correctness of the algorithm and then argue about the running time.

Let $R^*_\nu$ be the set of all the admissible partitions of $B_\nu$ that are realizable by $\mathcal C_\nu$ and~let~$R_\nu$ be the set of all the admissible partitions of~$B_\nu$ computed by Algorithm~1. 
\mbox{We show $R_\nu$ = $R^*_\nu$.}

We first prove $R^*_\nu \subseteq R_\nu$.
Let $P_\nu$ be a realizable admissible partition in $R^*_\nu$. Since $P_\nu$ is realizable by $\mathcal C_\nu$, there exists a super c-graph $\cgraph{\nu}{*}$ of $\mathcal C_\nu$ that realizes $P_\nu$ containing no saturating edges in the interior of $f^\infty_\nu$ and such that $G^*_\nu$ is planar. 
Let $\cgraph{\nu'}{*}$ (resp.\ $\cgraph{\nu''}{*}$) be the super c-graph of $\cgraph{\nu'}{}$ (resp.\ of $\cgraph{\nu''}{}$) obtained by adding to 
${\mathcal C}_{\nu'}$ (resp.\ to ${\mathcal C}_{\nu''}$) all the saturating edges in $G^*_\nu$ laying in the interior of the faces of $G_{\nu'}$ (resp.\ of $G_{\nu''}$) that are also faces of $G_\nu$.
Clearly, c-graph $\cgraph{\nu'}{*}$ (resp.\ c-graph $\cgraph{\nu''}{*}$) contains no saturating edges in the interior of~$f^\infty_{\nu'}$ (resp.\ in the interior of $f^\infty_{\nu''}$), since such a face does not belong to $G_\nu$. Let $P'$ and $P''$ be the admissible partitions of $B_{\nu'}$ and of $B_{\nu''}$ realized by $\cgraph{\nu'}{*}$ and by $\cgraph{\nu''}{*}$, respectively. By hypothesis, we have $P' \in R_{\nu'}$ and $P'' \in R_{\nu''}$.
We show that when step (\ref{algo:2}) of Algorithm~1 considers partitions $P'$ and $P''$, it successfully adds $P_\nu$ to the set $R_\nu$.
It is clear by the construction of $P'$ and of $P''$ that $P_\nu = P' \merge P''$. Therefore, we only need to show that when the algorithm considers the pair $(P',P'')$, it does not reject the instance at step (\ref{algo:2.2}), and thus $P_\nu$ is added to $R_\nu$ at step (\ref{algo:2.3}).
Let $P^* = P' \uplus P''$, which is constructed at step (\ref{algo:2.1}) of the algorithm.
Suppose, for a contradiction, that $\cal C$ is rejected at step (\ref{algo:2.2}).
Then, there exists a part $S_i$ of $P^*$ such that $S_i \cap B_\nu = \emptyset$ and $count(S_i)$ is smaller than the number of vertices in the cluster $\mu$ the part $S_i$ belongs to. Therefore, the cluster $\mu$ contains vertices that belong to $G \setminus G_\nu$, which implies that $P_\nu$ cannot satisfy \cref{item:c} of \cref{def:realizes}, a contradiction. This concludes the proof of this direction.

We now prove $R_\nu \subseteq R^*_\nu$. Let $P$ be a partition in $R_\nu$ obtained from the partitions $P' \in R_{\nu'}$ and $P'' \in R_{\nu''}$ (selected at step (\ref{algo:2}) of the algorithm). 
We show that $P$ is realizable by~$\mathcal C_\nu$.

By the definition of realizable partition, there exists a super c-graph $\cgraph{\nu'}{*}$ (resp.\ $\cgraph{\nu''}{*}$) of $\cgraph{\nu'}{}$ (resp.\ of $\cgraph{\nu''}{}$) that realizes $P'$ (resp.\ $P''$)
containing no saturating edges in the interior of $f^\infty_{\nu'}$ (resp.\ of $f^\infty_{\nu''}$) and such that~$G^*_{\nu'}$ (resp.\ $G^*_{\nu''}$) is planar. 
Let $\cgraph{\nu}{*}$ be the super c-graph of $\cgraph{\nu}{}$ constructed by adding to $\mathcal C_{\nu}$ the saturating edges in $\mathcal C^*_{\nu'}$ and $\mathcal C^*_{\nu''}$. 
We show that the c-graph $\cgraph{\nu}{*}$ realizes~$P$, contains no saturating edges in the interior of $f^\infty_\nu$, and $G^*_\nu$ is planar.

First, we have that $G^*_\nu$ is planar, since $G^*_{\nu'}$  and $G^*_{\nu''}$ are planar and do not contain saturating edges in the interior of $f^\infty_{\nu'}$ and of $f^\infty_{\nu''}$, respectively. By the previous arguments, we also have that $f^\infty_{\nu}$ contains no saturating edges.

We show that \cref{item:a} of \cref{def:realizes} holds.
Recall that $P= P' \merge P''$. Let $S_i$ be a part of $P$ that also belongs to $P'$ or to $P''$.
Then, since ${\cal C}^*_{\nu'}$ and ${\cal C}^*_{\nu''}$ realize $P'$ and $P''$, respectively, the vertices of $S_i$ are connected by paths of intra-cluster edges in $C^*_\nu$ as they are connected by paths of intra-cluster edges in either ${\cal C}^*_{\nu'}$ or ${\cal C}^*_{\nu''}$, by \cref{item:a} of \cref{def:realizes}.
Otherwise, let $S_i$ be a part of $P$ that does not belong to either $P'$ or $P''$. Then, by the definition of bubble merge, the part $S_i$ is obtained by projecting onto $B_\nu$ the generalized union $P^* = P' \uplus P''$. 
Thus, $S_i$ is a subset of a part $S^*_i$ of $P^*$.
Also, the vertices in each of the parts of $P'$ and of $P''$ contributing to the creation of $S^*_i$ are connected by paths of intra-cluster edges in ${\cal C}^*_{\nu'}$ and ${\cal C}^*_{\nu''}$, respectively, by \cref{item:a} of \cref{def:realizes}.
Therefore, we have that the connectivity of such sets implies the connectivity of the elements of $S_i$ by paths of intra-cluster edges that connect at their shared vertices in $B_{\nu'} \cap B_{\nu''}$.
We show that \cref{item:b} of \cref{def:realizes} holds.
Suppose, for a contradiction, that there exists some
cluster $\mu$ whose vertices appear in $B_\nu$ such that there is at least a vertex of~$\mu$ in~$G_\nu$ that is not connected by a path of intra-cluster edges to some vertex of $\mu$ in $B_\nu$. Then, consider the part $S_i \in P^*$ that dominates this vertex, which exists since $P'$ and $P''$ are realizable by ${\cal C}_{\nu'}$ and by ${\cal C}_{\nu''}$, respectively. We have that $S_i \cap B_\nu = \emptyset$ and that $count(S_i)$ is smaller than the number of vertices of $\mu$. Thus, step (\ref{algo:2.2}) would reject the instance, and thus $P$ would not be added to $R_\nu$, a contradiction.
Finally, we show that \cref{item:c} of \cref{def:realizes} holds.
Suppose, for a contradiction, that there exists some
cluster $\mu$ whose vertices only belong to $V(G_\nu) \setminus B_\nu$ and that there exist two vertices $u$ and $v$ of $\mu$ in $G_\nu$ that are not connected by a path of intra-cluster edges in~$G^*_\nu$. Then, consider the part $S_i \in P^*$ that dominates $u$. Observe that, $S_i$ does not dominate $v$. Similarly to the proof of \cref{item:b}, we have that $S_i \cap B_\nu = \emptyset$ and that $count(S_i)$ is smaller than the number of vertices of $\mu$. Thus, step (\ref{algo:2.2}) would reject the instance, and thus $P$ would not be added to $R_\nu$, a contradiction.

We conclude by analyzing the running time.
Step (\ref{algo:2.1}) can be performed in linear time in the sum of the sizes of $B_{\nu'}$ and $B_{\nu''}$, since the generalized union $P^*$ can be computed in $O(|B_{\nu'}| + |B_{\nu''}|)$ time, by \cref{le:linear-union}, and since the size of $P^*$, and thus the number of counters to be updated, is in $O(|B_{\nu'}| + |B_{\nu''}|)$. 
Step (\ref{algo:2.2}) can also be done in linear time by the previous argument.
Step (\ref{algo:2.3}) can be performed in $O(|B_{\nu'}| + |B_{\nu''}|)$ time, by \cref{le:bubble-merge}.
Further, the number of pairs of realizable partitions considered at step (\ref{algo:2}) is bounded by $|\NC(B_{\nu'})| \cdot|\NC(B_{\nu''})|$, which is bounded by $2^{2(|B_{\nu'}|+|B_{\nu''}|)}$. Finally,  $|B_{\nu'}|\leq \omega$ and $|B_{\nu''}|\leq \omega$. Thus, Algorithm~1 runs in
$O(2^{4\omega}\omega) = O(2^{4\omega + \log \omega})$ time.
\end{proof}

By~\cref{le:single-bag} and since $D$ contains $O(n)$ bags, we have the following.

\begin{lemma}\label{le:all-bags}
Sets $R_{\rho'}$ and $R_{\rho''}$ can be computed in $O(2^{4\omega + \log \omega} n)$ time.
\end{lemma}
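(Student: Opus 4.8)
The plan is to combine the per-bag cost bound of \cref{le:single-bag} with a traversal of the bond-carving decomposition $D$ in postorder, so that the relevant information is computed for every non-root bag exactly once. First I would recall that $D$ is a rooted binary tree whose leaves are the vertices of $\dual(G)$; since $G$ has $n$ vertices and is planar, Euler's formula gives that the number of faces of $G$, and hence the number of leaves of $D$, is $O(n)$, so $D$ has $O(n)$ bags in total. Processing the bags bottom-up, a leaf bag is handled directly: by the first item of the description of Algorithm~1, $R_\nu$ is obtained by enumerating the at most $\CAT(\omega)\le 2^{2\omega}$ non-crossing partitions of $B_\nu$, testing goodness in $O(\omega)$ time each, and computing the counters in $O(\omega)$ additional time, for a total of $O(2^{2\omega}\omega)=O(2^{2\omega+\log\omega})$ per leaf, which is dominated by the bound in \cref{le:single-bag}.

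Next, for each internal non-root bag $\nu$, its two children $\nu'$ and $\nu''$ have already been processed, so by \cref{le:single-bag} the relevant information for $\nu$ can be computed in $O(2^{4\omega+\log\omega})$ time from that of its children. (Here the subtlety, which \cref{le:single-bag} already absorbs, is that the bound on $|B_{\nu'}|$ and $|B_{\nu''}|$ comes from \cref{obs:boundary-cycle}: each interface graph is a cycle of length at most $\omega$, so each boundary set has size at most $\omega$.) The algorithm may instead reject the instance at step~(\ref{algo:2.2}) of Algorithm~1; in that case we terminate early, which only saves time. Summing the per-bag cost over all $O(n)$ bags of $D$ yields a total running time of $O(2^{4\omega+\log\omega}\,n)$ to compute the relevant information for every non-root bag, and in particular for the two children $\rho'$ and $\rho''$ of the root $\rho$, which is exactly the claim.

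I do not expect a serious obstacle here: the lemma is essentially the statement that ``$O(n)$ bags, each costing $O(2^{4\omega+\log\omega})$, total $O(2^{4\omega+\log\omega}n)$,'' once one has \cref{le:single-bag} in hand and observes that $D$ has linearly many bags. The only points requiring a line of justification are (i) that $\dual(G)$ has $O(n)$ vertices, hence $D$ has $O(n)$ leaves and $O(n)$ bags, which follows from planarity of $G$ via Euler's formula (or from the hypothesis that $G$ is connected with $n$ vertices and bounded face size, but even without the bound the face count is $O(n)$); and (ii) that the leaf-bag initialization cost is dominated by the internal-bag cost, so that the uniform bound of \cref{le:single-bag} applies to every bag. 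After these remarks the running time is a one-line summation.
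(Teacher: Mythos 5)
Your proposal is correct and follows exactly the paper's (one-line) argument: the paper derives \cref{le:all-bags} directly from \cref{le:single-bag} together with the observation that $D$ has $O(n)$ bags, summing the per-bag cost. Your additional remarks on the leaf-bag cost being dominated and on $\dual(G)$ having $O(n)$ vertices are sound and merely make explicit what the paper leaves implicit.
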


We obtain the next theorem by combining~\cref{le:all-bags} and~\cref{th:sufficiency}, where the additive $O(n^2)$ factor in the running time derives from the time needed to convert a carving decomposition of $\dual(G)$ into a bond-carving decomposition of the same width~\cite{DBLP:journals/combinatorica/SeymourT94}.

\begin{theorem}\label{th:2-connected-reduced}
{\sc C-Planarity Testing} can be solved in $O(2^{4 \omega + \log \omega} n + n^2)$ time for any $2$-connected $n$-vertex flat c-graph $\cgraph{}{}$, if a carving decomposition of $\dual(G)$ of width $\omega$ is provided.
\end{theorem}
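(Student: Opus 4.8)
The plan is to assemble the pieces already developed in the section into a single top-down algorithm and then account for its running time. First I would invoke \cref{le:all-bags} to compute, in $O(2^{4\omega + \log \omega} n)$ time, the relevant information for the two children $\rho'$ and $\rho''$ of the root $\rho$ of the bond-carving decomposition $(D,\gamma)$: namely the sets $R_{\rho'}$ and $R_{\rho''}$ of realizable admissible partitions of $B_{\rho'}$ and $B_{\rho''}$, together with all the counters $count(S_i)$. Note that $|R_{\rho'}|, |R_{\rho''}| \le \CAT(\omega) \le 2^{2\omega}$, and that by \cref{le:interface-graphs} the boundaries coincide, $B_{\rho'} = B_{\rho''} =: B$.

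Next I would apply the characterization of \cref{th:sufficiency}: $\mathcal{C}$ is c-planar if and only if there exist $P' \in R_{\rho'}$ and $P'' \in R_{\rho''}$ (already filtered for realizability, so \cref{con:1-suff} holds automatically) such that the generalized union $P^* = P' \uplus P''$ has no two distinct parts belonging to the same cluster. So the final step of the algorithm is: for every pair $(P', P'')$ with $P' \in R_{\rho'}$ and $P'' \in R_{\rho''}$, compute $P^* = P' \uplus P''$ via \cref{le:linear-union} in $O(\omega)$ time, and check in $O(\omega)$ time (by tagging each part with its cluster and scanning for a repeated cluster tag) whether \cref{con:3-suff} is satisfied; accept if some pair passes, reject otherwise. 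The correctness is then immediate from \cref{th:sufficiency}.

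For the running time: there are at most $(\CAT(\omega))^2 \le 2^{4\omega}$ pairs to consider, each handled in $O(\omega)$ time, giving $O(2^{4\omega + \log\omega})$ for this final phase — dominated by the $O(2^{4\omega + \log\omega} n)$ cost of \cref{le:all-bags}. Adding \cref{con:1} and \cref{con:2} of \cref{th:characterization} (planarity and hole-freeness), each testable in linear time as discussed after \cref{th:characterization}, does not affect the bound. The remaining $O(n^2)$ additive term is exactly the cost, already flagged in the statement's preamble, of converting the given carving decomposition of $\dual(G)$ of width $\omega$ into a \emph{bond}-carving decomposition of the same width via the algorithm of Seymour and Thomas~\cite{DBLP:journals/combinatorica/SeymourT94}, which is required before \cref{le:all-bags} can be applied. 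Summing, the total is $O(2^{4\omega + \log\omega} n + n^2)$.

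I do not expect any genuine obstacle here: the theorem is essentially the bookkeeping step that wires together \cref{th:sufficiency}, \cref{le:all-bags}, and the bond-carving conversion. The only point requiring a modicum of care is to state clearly that the $P'$ and $P''$ ranged over at the root are taken from the already-computed sets $R_{\rho'}, R_{\rho''}$ — so that realizability (\cref{con:1-suff}) is guaranteed for free — and that the generalized union $\uplus$ (rather than the bubble merge $\merge$) is the operation relevant at the root, since there is no outer face $f^\infty_\rho$ to preserve and one simply needs the full merged connectivity pattern to test \cref{con:3-suff}.
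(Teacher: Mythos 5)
Your proposal is correct and follows essentially the same route as the paper, which likewise obtains the theorem by combining \cref{le:all-bags} with \cref{th:sufficiency} and attributes the additive $O(n^2)$ term to the conversion of the given carving decomposition into a bond-carving decomposition of the same width via~\cite{DBLP:journals/combinatorica/SeymourT94}. Your write-up merely makes explicit the final pairwise check at the root, which the paper leaves implicit.
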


We are finally ready to prove our main result. 

\subparagraph{Proof of~\cref{th:main}.} 
Let $(D,\gamma)$ be a carving decomposition of $\dual(G)$ of optimal width $\omega = \cw(\dual(G))$.
First, we apply~\cref{le:2connected} to $\mathcal C$ to obtain, in $O(n)$ time, a $2$-connected flat c-graph $\cgraph{}{\prime}$ equivalent to $\mathcal C$ and a corresponding carving decomposition $(D',\gamma')$ of width $\omega' \leq \max(\omega,4)$.
Then, we apply~\cref{th:2-connected-reduced} to test whether $\mathcal C'$ (and thus $\mathcal C$) is c-planar.
The running time follows from the running time of \cref{th:2-connected-reduced}, from the 
fact that $\omega'= O(\omega)$, $|V(G')| \in O(n)$, and that a carving decomposition of $\dual(G)$ of optimal width can be computed in $O(n^3)$ time~\cite{DBLP:journals/talg/GuT08,DBLP:conf/isaac/ThilikosSB00}. This concludes the proof of the theorem.
\smallskip

We remark that in the recent reduction presented by Patrignani and Cortese to convert any non-flat c-graph $\cgraph{}{}$ into an equivalent independent flat c-graph $\cgraph{}{\prime}$, the carving-width of $\dual(G^\prime)$ is within an $O(h)$ multiplicative factor from the carving-width of~$\dual(G)$, where $h$ is the height of $\mathcal T$. This is due to the fact that, by~\cite[Lemma~10]{DBLP:conf/gd/CorteseP18}, $G^\prime$ is a subdivision of $G$ (which implies that $\tw(G^\prime)=\tw(G)$) and that each inter-cluster edge of $G$ is replaced by a path of length at most $4h-4$ in $G^\prime$ (which implies that $\ell(G^\prime)=\ell(G)(4h-4)$). Therefore, by \cite{DBLP:conf/gd/CorteseP18} and by the results presented in this section, we immediately derive an FPT algorithm for the non-flat case parameterized by $h$ and the dual carving-width of $G$. In \cref{se:algorithm-non-flat}, we show how to drop the dependency on $h$, by suitably adapting the relevant concepts defined for the flat case so that Algorithm~1 can also be applied to the~non-flat~case.

\section{From Flat Instances to General Instances}\label{se:algorithm-non-flat}

In this section, we show how the concepts of partition, non-crossing partition, good partition, admissible partition, generalized union, projection, and bubble merge need to be modified so that the algorithm for flat c-graphs presented in \cref{se:algorithm} can also be applied to non-flat c-graphs.

In the context of non-flat c-graphs, we are going to use an extension of the notion of non-crossing partition that takes into account the inclusion between clusters.
Let $\mathcal Q = \{q_1, q_2, \dots, q_n\}$ be a ground set. In a \emph{recursive partition} $P = \{Q_1,\dots,Q_k\}$ of $\mathcal Q$, it holds that either $Q_i \cap Q_j = \emptyset$ or $Q_i \subset Q_j$, with $1 \leq i < j \leq k$. Observe that $k \leq 2|\mathcal Q| - 1$.
Let now $\mathcal S = (s_1, s_2, \dots, s_n)$ be a cyclically-ordered set. A recursive partition $P$ of $\mathcal S$ is \emph{crossing}, if there exist elements $a,c \in S_i$ and $b,d \in S_j$, with $S_i,S_j \in P$ and $S_i \cap S_j = \emptyset$ (i.e. neither~$S_i$ nor~$S_j$ are contained in one another), such that $a \prec_b c$ and $c \prec_d a$; and, it is \emph{non-crossing}, otherwise. We observe that the size of the set $\RNC(\mathcal S)$ of all the non-crossing recursive partitions of $\mathcal S$ coincides with the $\mathscr{CAT}(2n-1)$, as each non-crossing recursive partition is in one-to-one correspondence with a rooted ordered tree on $2n-1$ vertices whose leaves are the elements of $\mathcal S$~\cite{stanley_2015}. 

Given a c-graph $\cgraph{}{}$ and a set $S \subseteq V(G)$, the \emph{lowest common ancestor} of $S$ \emph{in} $\mathcal T$ is the deepest cluster $\mu \in \mathcal T$ that is encountered {\em by all} the vertices in $S$ on the path to the root. Also, we say that $S$ \emph{belongs} to the cluster $\mu$. A recursive partition $\{ S_1,\dots,S_k \}$ of~$V' \subset V(G)$ is \emph{good} if each part $S_i$ belongs to a non-root cluster.

\begin{figure}[tb!]
\centering
\begin{subfigure}{.35\textwidth}
    \centering
  \includegraphics[page=4,scale=0.4]{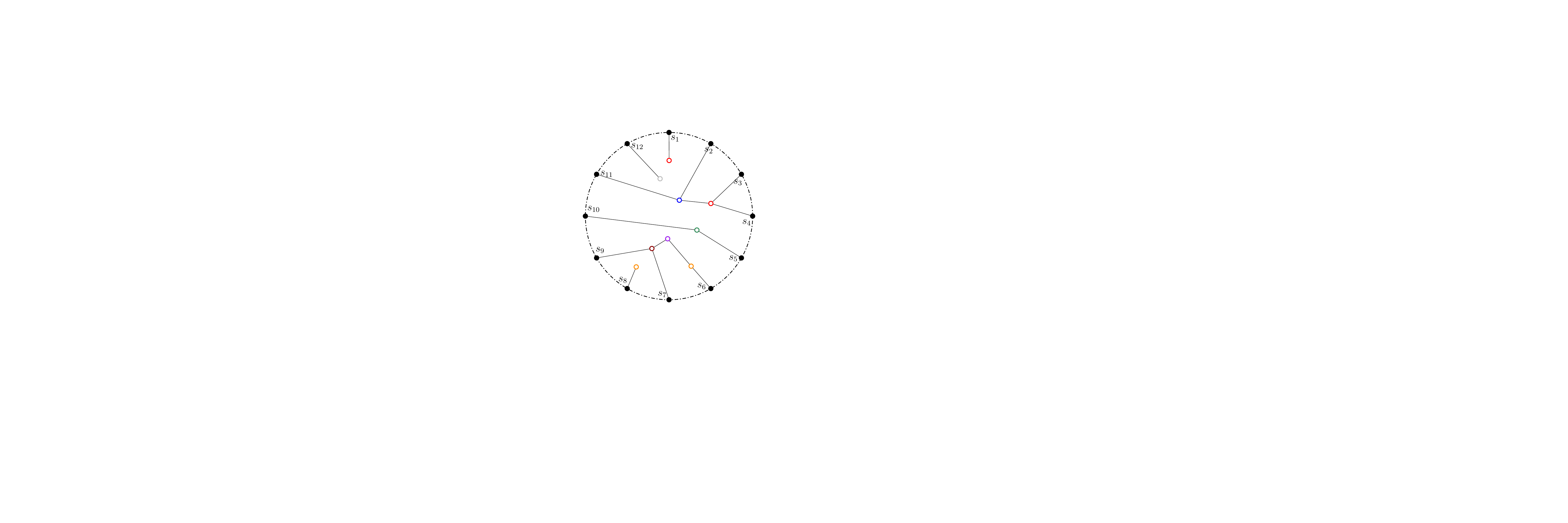}
  \subcaption{$P'$ and $P''$}
  \label{fi:auxGraph_a}
\end{subfigure}
\begin{subfigure}{.4\textwidth}
    \centering
  \includegraphics[page=5,scale=0.33]{figures/nonFlat}
  \subcaption{$\mathcal T$}
  \label{fi:auxGraph_b}
\end{subfigure}
\hfill
\begin{subfigure}{.2\textwidth}
    \centering
  \includegraphics[page=6,scale=0.33]{figures/nonFlat}
  \subcaption{$A(P',P'')$}
  \label{fi:auxGraph_c}
\end{subfigure}
\caption{Illustration for the construction of the auxiliary graph $A(P',P'')$ of two good partitions $P'$ and $P''$ defined on the vertices of a c-graph $\cgraph{}{}$. Partitions $P'$ and $P''$ are, in fact, also non-crossing, and thus admissible.}\label{fi:auxGraph}
\end{figure}

In order to the extend the definition of generalized union to be appliable to pairs of good recursive partitions, we introduce the following auxiliary directed graph; refer to \cref{fi:auxGraph}. Given a pair $(P', P'')$ of good recursive partitions of ground sets $\mathcal Q' \subset V(G)$ and $\mathcal Q'' \subset V(G)$, respectively, we construct a graph $A(P', P'')$ as follows. The vertex set of $A(P',P'')$ contains a vertex $v_\mu$ if there exists a part $S \in P' \cup P''$ that belongs to the cluster $\mu$. The edge set of~$A(P', P'')$ contains an edge $v_\mu v_\nu$ directed from $v_\mu$ to $v_\nu$, if $\nu$ is an ancestor of $\mu$ in $\mathcal T$ and there exists no vertex $v_\tau$ in the vertex set of~$A(P', P'')$ such that the cluster $\tau$ belongs to the path connecting $\mu$ and $\nu$ in $\mathcal T$. Observe that, by construction, $A(P',P'')$ is a forest. Also, two parts $S', S'' \in P' \cup P''$ intersect only if there exists a directed path in $A(P', P'')$, possibly of length $0$, between the vertices corresponding to the clusters to which $S'$ and $S''$ belong. This holds since there exists a directed path between two vertices of~$A(P', P'')$ only if the corresponding clusters are one an ancestor of the other in $\mathcal T$. 

We are now ready to define the \emph{generalized union} $\uplus_R$ of $P'$ and $P''$ , which returns a partition $P^* = P' \uplus_R P''$ of $\mathcal Q' \cup \mathcal Q''$ obtained as follows; refer to \cref{fi:example_generalizedRecUnion}. Initialize $P^* = P' \cup P''$. Then, we perform the following two steps.
\begin{itemize}
	\item \textbf{Phase~1.} First, we visit the vertices of $A(P', P'')$. When we are at a vertex $v_\mu$, as long as there exist $Q_i,Q_j \in P^*$ such that $Q_i$ and $Q_j$ belong to the same cluster $\mu$ and $Q_i \cap Q_j \neq \emptyset$, we remove sets $Q_i$ and $Q_j$ from $P^*$ and add the set $Q_i \cup Q_j$ to $P^*$. Observe that, after this step is completed, any two parts $Q_i,Q_j \in P^*$ intersect only if $Q_i$ and $Q_j$ belong to different clusters. 
	\item \textbf{Phase~2.} Then, we again visit the vertices of $A(P', P'')$, but according to a topological ordering of such vertices. When we are at a vertex $v_\mu$, as long as there exist $Q_i,Q_j \in P^*$ such that:
	\begin{enumerate}[(a)]
	 	\item $Q_i \cap Q_j \neq \emptyset$, and
	 	\item $Q_i$ belongs to the cluster $\mu$, $Q_j$ belongs to the cluster $\nu$, and $v_\mu v_\nu$ is an edge of $A(P', P'')$,
	 \end{enumerate}
	 we remove the part $Q_j$ from $P^*$ and \mbox{add the part $Q_i \cup Q_j$ to $P^*$.}
\end{itemize}

Clearly, by the above definition, the generalized union of $P'$ and $P''$ can be computed in $O\big((|\mathcal Q'|+|\mathcal Q''|)^2\big)$ time.

\begin{figure}
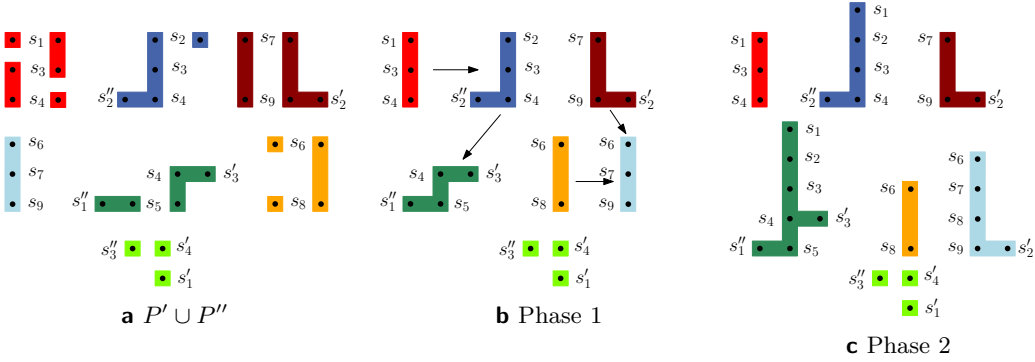

\centering
\begin{subfigure}{.32\textwidth}
    \centering
  \includegraphics[page=8,scale=0.35]{figures/nonFlat}
  \subcaption{$P' \cup P''$}
  \label{fi:example_generalizedRecUnion_a}
\end{subfigure}
\hfill
\begin{subfigure}{.32\textwidth}
    \centering
  \includegraphics[page=10,scale=0.35]{figures/nonFlat}
  \subcaption{Phase~1}
  \label{fi:example_generalizedRecUnion_b}
\end{subfigure}
\begin{subfigure}{.32\textwidth}
    \centering
  \includegraphics[page=11,scale=0.35]{figures/nonFlat}
  \subcaption{Phase~2}
  \label{fi:example_generalizedRecUnion_c}
\end{subfigure}
\caption{Illustrations for the construction of the generalized union of the two admissible recursive partitions $P'$ and $P''$ illustrated in \cref{fi:auxGraph_a}. Different parts are represented by shaded polygons.
The arrows in (b) show how the parts obtained in Phase~1 are merged in Phase~2, according to the auxiliary graph $A(P',P'')$, to obtain (c).
}\label{fi:example_generalizedRecUnion}
\end{figure}

The definition of \emph{admissible recursive partition} and of \emph{projection} of a recursive partition \emph{onto} a set are identical to the corresponding definitions in the non-recursive setting.

In the same way as an admissible partition $P$ can be naturally associated with a cycle-star $G(P)$, an admissible recursive partition $P'$ of a cyclically-ordered set $\mathcal S$ can be naturally associated with a $2$-connected plane graph $G_R(P')$, as follows.  
The outer face of $G_R(P')$ is a cycle $C(P')$ whose vertices are the elements in $\mathcal S$ and the clockwise order in which they appear along $C(P')$ is the same as in $\mathcal S$. Also, for each part $S_i \in P'$, graph $G_R(P')$ contains a vertex $v_i$ in the interior of $C(P')$.
We further add the following edges to the graph $G_R(P')$.
First, we add an edge connecting two vertices $v_i$ and $v_j$ if and only if $S_j$ is the smallest part in $P'$ such that $S_i \subset S_j$.
Then, we add an edge connecting a vertex $v \in \mathcal S$ with a vertex $v_i$ if and only if $S_i$ is the smallest part in $P'$ containing $v$.
We say that $G_R(P')$ is the \emph{cycle-tree} associated with $P'$; see \cref{fi:recursive_projection_c} for an example.

The definition of \emph{bubble merge} $\merge$ of two admissible recursive partition is identical to the corresponding definition in the non-recursive setting where, however, the operator $\uplus_R$ is used instead of the operator $\uplus$, and the role played by cycle-stars is now assumed by the cycle-trees associated with the input admissible recursive partitions; refer to \cref{fi:recursive_projection}. Since the cycle-tree~$G_R(P)$ associated with an admissible recursive partition on a cyclically-ordered set~$\cal S$ can be constructed in $O(|{\cal S}|^2)$ time and it has size linear in $|\cal S|$, the bubble merge of two admissible recursive partitions $P'$ and $P''$ of cyclically-ordered sets $\mathcal{S}'$ and $\mathcal{S}''$, respectively, can be performed in $O\big((|{\cal S}'| + |{\cal S}''|)^2\big)$ time.

The definition of \emph{realizable} (admissible partition) and of \emph{dominated by} are identical to the corresponding definitions in the non-recursive setting, where we exploit the definition of {\em belongs} given above (i.e., a part belongs to a cluster $\mu$, if $\mu$ is the lowest common ancestor of the part).

With the modification to the relevant concepts given above, Algorithm~1 can be applied to compute the relevant information for each non-leaf non-root bag $\nu$ of the bond-carving decomposition $(D,\gamma)$ of $\dual(G)$. However, in the non-flat case, we need to compute the relevant information for the leaf bags in a slightly different way than in the flat case.
Recall that, if $\nu$ is a leaf bag of $D$, then $G_\nu=I_\nu$ consists of the vertices and edges of a single face of~$G$. Further, by \cref{obs:boundary-cycle}, graph $G_\nu$ is a cycle of \mbox{length at most $\omega$.} In this case, the sets $R_\nu$ of all the admissible recursive partitions of $B_\nu$ that are realizable by $G_\nu$ simply coincides with the set of all the admissible recursive partitions of $B_\nu$.
Therefore, we can construct the set~$R_\nu$ by enumerating all the possible at most $\CAT(2\omega-1) \leq 2^{4\omega-2}$ non-crossing recursive partitions of $B_\nu$ and by testing in $O(\omega)$ time whether each of such partitions is good. 
Further, for each $P \in R_\nu$, we can compute all counters $count(S_i)$ for every $S_i \in P$ in total $O(\omega)$ time, by visiting the cycle-tree $G_R(P)$ associated with $P$. 

Since the generalized union and, thus, the bubble merge operators take quadratic time, the computation of the relevant information at each non-leaf node of $D$ can be done in $O(2^{4\omega-2}2^{4\omega-2}\omega^2)=O(4^{4 \omega + \log \omega})$ time, given the relevant information for its two children. Therefore, we obtain the following main result.

\begin{theorem}\label{th:non-flat-algorithm}
{\sc C-Planarity Testing} can be solved in $O(4^{4 \omega + \log \omega} n + n^2)$ time for any $n$-vertex embedded non-flat c-graph $\cgraph{}{}$, where $\omega$ is the carving-width of $\dual(G)$, if a carving decomposition of $\dual(G)$ of width $\omega$ is provided, and in $O(4^{4 \omega + \log \omega} n + n^3)$ time, otherwise.
\end{theorem}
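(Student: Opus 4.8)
The plan is to reduce the non-flat case to the machinery already developed for flat c-graphs in \cref{se:algorithm}, namely the dynamic program (Algorithm~1) over a bond-carving decomposition of $\dual(G)$, while replacing non-crossing partitions by the enriched notion of non-crossing \emph{recursive} partitions introduced above. First I would observe that \cref{th:characterization} holds verbatim for non-flat c-graphs (the excerpt already notes this), so c-planarity is still equivalent to the existence of a c-connected planar saturation, and hence to \cref{con:3} alone once \cref{con:1,con:2} are checked in linear time. Next, I would invoke \cref{le:2connected}, whose statement and proof are phrased for general (possibly non-flat) c-graphs, to assume without loss of generality that $G$ is $2$-connected, at the cost of only an additive $O(n)$ preprocessing term and an $\omega'=\max(\omega,4)$ increase; thus \cref{obs:boundary-cycle} applies and each interface graph $I_\nu$ is a cycle of length at most $\omega$, so the boundary $B_\nu$ is a cyclically-ordered set of size at most $\omega$.

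The core of the argument is that \cref{def:realizes}, \cref{le:necessity}, \cref{le:interface-graphs}, and \cref{th:sufficiency} go through unchanged once ``admissible partition'' is read as ``admissible recursive partition'' and the generalized-union and bubble-merge operators are read as $\uplus_R$ and the recursive $\merge$; the reason is that these statements only ever refer to \emph{which cluster a part belongs to} and to \emph{connectivity of dominated vertex sets}, and the recursive notion of ``belongs'' (lowest common ancestor) together with the auxiliary forest $A(P',P'')$ is precisely designed so that $P'\uplus_R P''$ merges two parts exactly when their underlying vertex sets must be merged after taking the union of the two planar saturations — Phase~1 handling parts of the same cluster and Phase~2 propagating merges up ancestor edges of $A(P',P'')$. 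I would therefore re-run the proof of \cref{le:single-bag} with these substitutions: the leaf-bag step now enumerates the at most $\CAT(2\omega-1)\le 2^{4\omega-2}$ non-crossing recursive partitions of $B_\nu$ and tests goodness and computes the counters $count(S_i)$ in $O(\omega)$ time each using the cycle-tree $G_R(P)$; the internal-bag step considers at most $(2^{4\omega-2})^2$ pairs $(P',P'')$ and, for each, computes $P^*=P'\uplus_R P''$, checks step (2b), and computes $P=P'\merge P''$. The only quantitative change from the flat case is that $\uplus_R$ and hence $\merge$ now cost $O\big((|B_{\nu'}|+|B_{\nu''}|)^2\big)=O(\omega^2)$ rather than $O(\omega)$, because $A(P',P'')$ and the cycle-tree have to be rebuilt; multiplying the per-pair cost $O(\omega^2)$ by $O(2^{8\omega})$ pairs gives $O(2^{8\omega+2\log\omega})=O(4^{4\omega+\log\omega})$ per bag, and summing over the $O(n)$ bags of $D$ gives the $O(4^{4\omega+\log\omega}n)$ term. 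Finally, by the analogue of \cref{th:sufficiency} at the root $\rho$ — test all pairs $P'\in R_{\rho'}$, $P''\in R_{\rho''}$ for condition (ii), i.e.\ that $P'\uplus_R P''$ has no two distinct parts belonging to the same cluster — we decide c-planarity; the additive $O(n^2)$ (resp.\ $O(n^3)$) term comes, as in \cref{th:main}, from converting an arbitrary carving decomposition of $\dual(G)$ into a bond-carving decomposition of the same width \cite{DBLP:journals/combinatorica/SeymourT94} (resp.\ computing an optimal-width carving decomposition from scratch \cite{DBLP:journals/talg/GuT08,DBLP:conf/isaac/ThilikosSB00}).

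The step I expect to be the main obstacle is verifying that $P^*=P'\uplus_R P''$ really captures the connectivity pattern of the merged saturation in the presence of nested clusters — concretely, that Phases~1 and~2 are correct and order-independent and that the result is again a valid recursive partition whose parts still ``belong'' to the right clusters. The subtlety is that when a part $S_i$ of an inner cluster $\mu$ meets a part $S_j$ of an ancestor cluster $\nu$, absorbing $S_i$ into $S_j$ may change the lowest common ancestor of $S_j$; one has to argue, using the fact that $A(P',P'')$ is a forest and that Phase~2 proceeds in topological order along ancestor edges, that every such forced merge is eventually performed and that no spurious merge is introduced, so that the output recursive partition is exactly the one realized by the union of the two witness saturations. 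Once this combinatorial lemma about $\uplus_R$ is in hand — mirroring the role \cref{le:linear-union} played in the flat case — the rest of the proof is a routine transcription of \cref{le:single-bag}, \cref{le:all-bags}, \cref{th:2-connected-reduced}, and the proof of \cref{th:main}.
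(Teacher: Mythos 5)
Your proposal follows essentially the same route as the paper: extend partitions to non-crossing recursive partitions counted by $\CAT(2\omega-1)\le 2^{4\omega-2}$, replace $\uplus$ and the cycle-star-based bubble merge by the quadratic-time $\uplus_R$ (via the auxiliary forest $A(P',P'')$) and cycle-trees, and re-run Algorithm~1 over the bond-carving decomposition with per-bag cost $O(2^{4\omega-2}\cdot 2^{4\omega-2}\cdot\omega^2)=O(4^{4\omega+\log\omega})$, plus the same additive terms for decomposition conversion or computation. The correctness concern you flag about $\uplus_R$ in the presence of nested clusters is legitimate but is also left at roughly the same level of detail in the paper itself, so your argument matches the published one in both structure and rigor.
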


\begin{figure}[t]
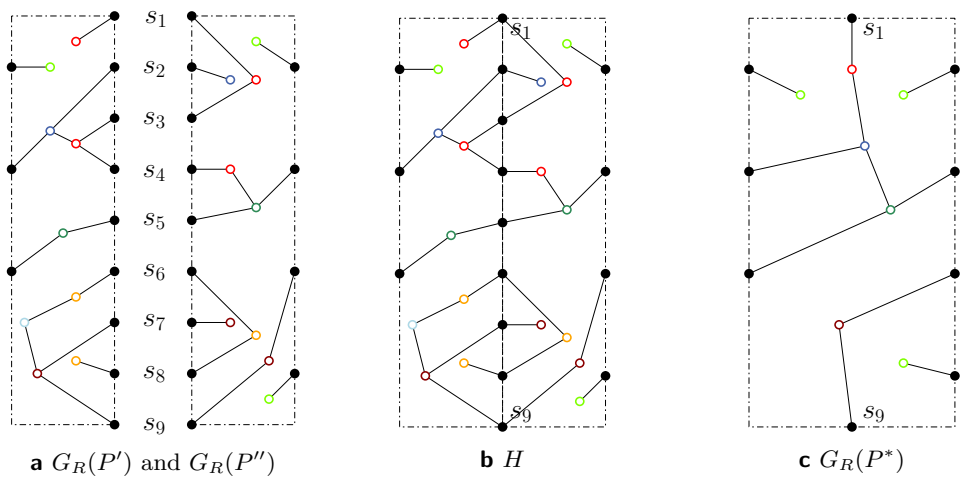

\centering
\hfill
\begin{subfigure}{.32\textwidth}
    \centering
  \includegraphics[page=12,scale=.6]{figures/nonFlat}
  \subcaption{$G_R(P')$ and $G_R(P'')$}
  \label{fi:recursive_projection_a}
\end{subfigure}
\hfil
\begin{subfigure}{.32\textwidth}
    \centering
  \includegraphics[page=13,scale=.6]{figures/nonFlat}
  \subcaption{$H$}
  \label{fi:recursive_projection_b}
\end{subfigure}
\begin{subfigure}{.32\textwidth}
    \centering
  \includegraphics[page=14,scale=.6]{figures/nonFlat}
  \subcaption{$G_R(P^*)$}
  \label{fi:recursive_projection_c}
\end{subfigure}
\caption{Illustrations for the construction of the bubble merge of the two admissible recursive partitions $P'$ and $P''$ illustrated in \cref{fi:auxGraph_a}.}\label{fi:recursive_projection}
\end{figure}

\section{Graph-Width Parameters Related to the Dual Carving-Width}\label{se:related-params}

In this section, we discuss implications of our algorithm for instances of bounded embedded-width and of bounded dual cut-width.

\subparagraph{Embedded-width.} A tree decomposition of an embedded graph $G$ \emph{respects} the embedding of~$G$ if, for every face $f$ of $G$, at least one bag contains all the vertices of $f$~\cite{BELW}. 
The \emph{embedded-width} $\emw(G)$ of $G$ is the minimum width of any of its tree decompositions that respect the embedding of $G$.  
For consistency with other graph-width parameters, in the original definition of this width measure~\cite{BELW} the vertices of the outer face are not required to be in some bag. Here, we adopt the variant presented in~\cite{degg-sfefcg-conf-17}, where the tree decomposition must also include \mbox{a bag containing the outer face.}
We have the following.

\begin{lemma}\label{le:relations-embedded}
Let $G$ be an embedded graph. Then, $\cw(\dual(G)) \leq {\emw}^2(G) + 2{\emw}(G)$.
\end{lemma}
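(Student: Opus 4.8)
The plan is to exhibit a carving decomposition of $\dual(G)$ whose width is bounded by $\emw^2(G)+2\emw(G)$, by transforming a tree decomposition of $G$ that respects the embedding. Let $(T,\mathcal B)$ be a tree decomposition of $G$ of width $k=\emw(G)$ that respects the embedding, so every bag has at most $k+1$ vertices of $G$ and, for each face $f$, some bag contains all vertices of $f$. The first step is to associate with each face $f$ of $G$ (equivalently, each vertex $v_f$ of $\dual(G)$) a node $t(f)$ of $T$ whose bag contains all the vertices of $f$; this gives a map from $V(\dual(G))$ to the nodes of $T$. The natural idea is then to build a carving decomposition of $\dual(G)$ whose underlying tree is obtained from (a subdivision of) $T$ by attaching the dual vertices $v_f$ as leaves at their assigned nodes $t(f)$, and then cleaning this up into a binary tree with the dual vertices exactly as leaves (first suppressing degree-two nodes, then expanding high-degree nodes by introducing dummy internal nodes, which does not increase the width of the resulting carving decomposition since each newly created cut is a sub-multiset of an existing one).

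The second and central step is to bound, for each edge $e$ of this carving tree, the size of the cut-set it induces in $\dual(G)$, i.e.\ the number of edges of $\dual(G)$ — equivalently, edges of $G$ — crossing the corresponding bipartition of the faces of $G$. Consider an edge of the carving tree that comes from an edge $st$ of $T$; deleting it splits $T$ into two parts and correspondingly splits $V(\dual(G))$ into two sets $F_1,F_2$ of faces. An edge $e=uw$ of $G$ lies in the cut $(F_1,F_2)$ exactly when its two incident faces lie on opposite sides. The key observation is that for such an edge $e$, the faces incident to $e$ are assigned to nodes on opposite sides of $st$, yet both contain the endpoints $u$ and $w$; by the contiguity property of tree decompositions (the bags containing $u$ form a subtree, likewise for $w$), this forces both $u$ and $w$ to lie in the bag $\mathcal B_s$ (and in $\mathcal B_t$). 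Hence every edge in the cut has both endpoints in the $(k+1)$-vertex bag $\mathcal B_s \cap \mathcal B_t$; since $G$ is planar and simple, the number of such edges is at most linear in $k$, and in fact at most $\binom{k+1}{2}$ trivially, which already gives a bound of the right order. To get the stated bound $k^2+2k = (k+1)^2-1$, one counts more carefully: the cut edges form a subgraph of $G$ on the at most $k+1$ vertices of $\mathcal B_s$, and one uses the planarity/embedding-respecting structure to bound this by $(k+1)^2-1$; for edges of the carving tree arising from subdivision or expansion, the cut is contained in one of these, so the same bound holds, and for the leaf edges incident to a dual vertex $v_f$ the cut is just the edges of the face $f$, of size at most $\ell(G)\le$ the bag size bound as well.

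The main obstacle will be handling two technical points cleanly. First, several dual vertices may be assigned to the same node of $T$, and the bound must survive the process of expanding that node into a small binary subtree: one needs that any cut created inside such an expansion is a sub-multiset of the cut at the tree edge entering the expansion, together with faces' own boundary edges, so that no cut exceeds $(k+1)^2-1$. Second, $\dual(G)$ is a multigraph (parallel dual edges arise from an edge shared twice by the same face, or bridges), so ``number of edges in the cut'' must be counted with multiplicity, and one must check that the bag-membership argument still pins both endpoints of the corresponding primal edge into $\mathcal B_s$. I would dispatch the first point by defining the expansion so that each dummy internal node's cut is literally a subset of its parent's cut-set union some leaf-face boundaries already accounted for, and the second by noting that the primal endpoint argument is about vertices of $G$ and is unaffected by dual multiplicity, so the count of crossing primal edges (with multiplicity) is still at most the number of edges of $G$ with both endpoints in a size-$(k+1)$ set. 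Combining, every cut of the carving decomposition has size at most $(k+1)^2-1 = \emw^2(G)+2\emw(G)$, so $\cw(\dual(G))\le \emw^2(G)+2\emw(G)$, as claimed.
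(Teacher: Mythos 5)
Your proof takes a genuinely different route from the paper's. The paper's own proof contains no construction at all: it chains known inequalities, namely $\tw(\dual(G))\le \tw(G)+1$, $\cw(H)\le \Delta(H)(\tw(H)+1)$ together with $\Delta(\dual(G))=\ell(G)$, and the definitional facts $\emw(G)\ge \ell(G)$ and $\emw(G)\ge \tw(G)$, to obtain $\cw(\dual(G))\le \ell(G)(\tw(G)+2)\le \emw(G)(\emw(G)+2)$. Your direct construction --- hanging each dual vertex $v_f$ off a node of $T$ whose bag contains all vertices of $f$, and using bag-contiguity to pin both endpoints of every crossing primal edge into a single bag --- is more work but buys more: the crossing edges form a simple planar graph on at most $\emw(G)+1$ vertices, so every cut has size at most $3\,\emw(G)-3$ (the leaf cuts being bounded by $\ell(G)$), i.e., your argument actually yields a bound \emph{linear} in $\emw(G)$ rather than quadratic. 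In particular the ``more careful counting'' you announce is unnecessary, since already $\binom{\emw(G)+1}{2}\le \emw^2(G)+2\emw(G)$.

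One step, as you justify it, would fail, though it is repairable with the tool you already have. When a node $t$ of $T$ with several tree-neighbours and several assigned faces is expanded into a binary subtree, the new cuts are \emph{not} sub-multisets of existing cuts, nor are they contained in ``the parent's cut-set union some leaf-face boundaries'': a cut inside the expansion of $t$ can separate the faces assigned within the component of $T-t$ containing one neighbour $s_1$ from those assigned within the component containing another neighbour $s_2$, and a dual edge joining two such faces crosses neither the cut at the edge between $t$ and its parent nor the boundary of any face assigned to $t$ itself. The correct repair is to rerun your key observation at the single node $t$: any dual edge crossing a cut inside the expansion of $t$ joins faces assigned to different components of $T-t$ (or one of them to $t$ itself), so by contiguity of the subtrees of bags containing each endpoint, both endpoints of the corresponding primal edge lie in $\mathcal B_t$, and the same count of edges spanned by a set of at most $\emw(G)+1$ vertices applies. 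With that substitution your proof goes through and in fact strengthens the stated inequality.
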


\begin{proof}
Recall that the dual $\dual(G)$ of $G$ has maximum degree $\ell(G)$ and maximum face size~$\Delta(G)$.
It is well-known that the tree-width $\tw(G)$ of $G$ and the tree-width $\tw(\dual(G))$ of~$\dual(G)$ satisfy the relation: $\tw(\dual(G)) \leq \tw(G) + 1$~\cite{DBLP:journals/endm/BouchitteMT01}. Also, for any graph $H$, the carving-width $\cw(H)$ of $H$ satisfies the relation: $\cw(H) \leq \Delta(H)(\tw(H)+1)$~\cite{DBLP:journals/ijcga/BiedlV13}.
Therefore, we have $\cw(\dual(G)) \leq \ell(G)(\tw(G)+2)$.
Finally, the embedded-width ${\emw}(G)$ of $G$ satisfies the relations: ${\emw}(G)\geq \ell$ and ${\emw}(G)\geq \tw(G)$, by definition~\cite{BELW,degg-sfefcg-conf-17}.
Combining the above inequalities, we get the stated bound for the carving-width of $\dual(G)$. 
\end{proof}

\subparagraph{Cut-width.} Let $\pi$ be a linear order of the vertex set of a graph $G=(V,E)$. 
By splitting~$\pi$ into two linear orders $\pi_1$ and $\pi_2$ such that $\pi$ is the concatenation of $\pi_1$ and $\pi_2$, we define a \emph{cut} of $\pi$. 
The \emph{width} of this cut is the number of edges between a vertex in $\pi_1$ and \mbox{a vertex in $\pi_2$.}
The \emph{width} of $\pi$ is the maximum width over all its possible cuts. Finally, the \emph{cut-width} of $G$ is the minimum width over all the possible linear orders of $V$. The \emph{dual cut-width} is the cut-width of the dual~of~$G$.

The following relationship between cut-width and carving-width has been proved in~\cite{cgp}.

\begin{theorem}[Theorem 4.3, \cite{cgp}]\label{le:relations-cut}
\mbox{The carving-width of $G$ is at most twice its cut-width.}
\end{theorem}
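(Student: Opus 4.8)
The plan is to prove that $\cw(G) \le 2\cdot(\text{cut-width of }G)$ by taking a linear layout witnessing the cut-width and turning it into a carving decomposition of comparable width. Let $\pi = (v_1, v_2, \dots, v_n)$ be an optimal linear order of $V(G)$ of width $k$, so that every ``prefix cut'' separating $\{v_1,\dots,v_i\}$ from $\{v_{i+1},\dots,v_n\}$ has at most $k$ crossing edges. First I would build a ``caterpillar-like'' rooted binary tree $D$ whose $n$ leaves are the vertices of $G$ read in the order of $\pi$: concretely, take a path of internal nodes $p_1, p_2, \dots, p_{n-1}$ and attach leaf $v_i$ to node $p_i$ for $i < n$ and leaf $v_n$ also to $p_{n-1}$, so that the subtree hanging below the edge $(p_{i-1}, p_i)$ has exactly the leaf set $\{v_i, v_{i+1}, \dots, v_n\}$ (after suitably rooting $D$). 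Equivalently one can think of $D$ as the binary tree in which the leaves appear left-to-right as $v_1, \dots, v_n$ and the internal structure is a path.

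With this $D$ in hand, the key step is to bound $|\gamma(\nu)|$ for every non-root bag $\nu$. There are two kinds of bags. For a leaf bag $\nu_i$ corresponding to vertex $v_i$, the cut-set $\gamma(\nu_i)$ is exactly the set of edges incident to $v_i$, i.e. $\deg(v_i)$ edges. Here I would use the observation that the degree of any single vertex is at most twice the cut-width: the edges at $v_i$ are split by the cut immediately before $v_i$ and the cut immediately after $v_i$ into two groups (those going to earlier vertices, those going to later vertices), each of size at most $k$, so $\deg(v_i) \le 2k$. For an internal (non-root) bag $\nu$, its leaf set is a suffix $\{v_j, v_{j+1}, \dots, v_n\}$ of $\pi$, so $\gamma(\nu)$ is precisely the cut of $\pi$ between position $j-1$ and $j$, which has width at most $k$ by choice of $\pi$. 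Hence every bag has $|\gamma(\nu)| \le 2k$, giving a carving decomposition of width at most $2k = 2\cdot(\text{cut-width of }G)$, and therefore $\cw(G) \le 2\cdot(\text{cut-width of }G)$.

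The main obstacle I anticipate is a bookkeeping/degeneracy issue rather than a deep one: a carving decomposition is required to be a tree whose \emph{leaves} are exactly the vertices, and one must be careful that the ``caterpillar'' $D$ is genuinely a binary tree with the right leaf set at the bottom of each edge, and that the root is handled so that the two subtrees below the root realize a genuine prefix/suffix split of $\pi$ (recall the carving-width definition in the excerpt does not assign a cut-set to the root itself). A second subtlety is that the internal-bag bound only needs the cut-width property directly, but the leaf-bag bound genuinely needs the ``$\Delta(G) \le 2\cdot\text{cut-width}$'' fact; it is worth stating this explicitly since without it a high-degree vertex could blow up a single leaf bag even when all the prefix cuts are small. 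Once these two points are set up cleanly, the rest is immediate. This matches exactly what we need: combined with \cref{le:relations-cut} applied to $\dual(G)$, any bound on the dual cut-width yields a bound on the dual carving-width, and hence \cref{th:main} and \cref{th:non-flat-algorithm} apply to instances of bounded dual cut-width.
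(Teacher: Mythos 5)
Your argument is correct. Note that the paper does not actually prove this statement itself --- it imports it verbatim as Theorem~4.3 of~\cite{cgp} --- so there is no internal proof to compare against; judged on its own, your construction is the standard derivation and it goes through. The caterpillar tree you describe is a legitimate rooted binary carving decomposition (root $p_1$ with children $v_1$ and $p_2$, each $p_j$ with children $v_j$ and $p_{j+1}$), every internal non-root bag $p_j$ induces exactly the prefix cut of $\pi$ at position $j-1$ and so has width at most $k$, and each leaf bag $v_i$ has width $\deg(v_i)\le 2k$ because its back-edges and forward-edges are counted by the two adjacent prefix cuts. You are also right to flag the leaf bags as the essential source of the factor $2$: on a star $K_{1,n}$ the cut-width is about $n/2$ while the centre's leaf bag forces carving-width $n$, so the bound is tight and cannot be obtained from the prefix cuts alone.
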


By \cref{le:relations-embedded,le:relations-cut}, we have that single-parameter FPT algorithms also exist with respect to the embedded-width and to the dual cut-width of the underlying graph.

\section{Conclusions}\label{se:conclusion}

In this paper, we studied the {\sc C-Planarity Testing} problem for c-graphs with a prescribed combinatorial embedding. We showed that the problem is polynomial-time solvable when the dual carving-width of the underlying graph of the input c-graph is bounded. In particular, this addresses a question we posed in~\cite{degg-sfefcg-conf-17}, regarding the existence of notable graph-width parameters such that the {\sc C-Planarity Testing} problem is fixed-parameter tractable with respect to a single one of them. Namely, we answer this question in the affirmative when the parameters are the embedded-width of the underlying graph, and the carving-width and cut-width of its planar dual.

\bibliography{bibliography}

\end{document}